\numberwithin{equation}{section}
\numberwithin{equation}{section}		
\numberwithin{figure}{section}			
\numberwithin{table}{section}				
\newcommand{\babs}[1]{\left|{#1}\right|}
\newcommand{\bnorm}[1]{\left|\left|{#1}\right|\right|}
\newcommand{\vect}[1]{\boldsymbol{\mathbf{#1}}}
\DeclareFontShape{T1}{lmr}{b}{sc}{<->ssub*cmr/bx/sc}{}
\DeclareFontShape{T1}{lmr}{bx}{sc}{<->ssub*cmr/bx/sc}{}
\newcommand{\abs}[1]{\lvert#1\rvert}
\newcommandx{\unsure}[2][1=]{\todo[linecolor=red,backgroundcolor=red!25,bordercolor=red,#1]{#2}}
\newcommandx{\change}[2][1=]{\todo[linecolor=blue,backgroundcolor=blue!25,bordercolor=blue,#1]{#2}}
\newcommandx{\info}[2][1=]{\todo[linecolor=OliveGreen,backgroundcolor=OliveGreen!25,bordercolor=OliveGreen,#1]{#2}}
\newcommandx{\improvement}[2][1=]{\todo[linecolor=black,backgroundcolor=black!25,bordercolor=black,#1]{#2}}
\newcommandx{\thiswillnotshow}[2][1=]{\todo[disable,#1]{#2}}
\crefname{proposition}{Proposition}{Propositions}
\crefname{equation}{}{}
\newtheorem{theorem}{Theorem}[section]
\newtheorem{lemma}[theorem]{Lemma}
\theoremstyle{definition}
\newtheorem{remark}[theorem]{Remark}
\crefname{assumption}{Assumption}{Assumptions}
\crefname{definition}{Definition}{Definitions}
\crefname{corollary}{Corollary}{Corollaries}
\crefname{enumi}{item}{items}
\DeclareMathOperator{\R}{\mathbb{R}}
\DeclareMathOperator{\C}{\mathbb{C}}
\renewcommand{\tilde}{\widetilde}
\renewcommand{\hat}{\widehat}
\DeclareMathOperator{\BO}{\mathcal{O}}
\DeclareMathOperator{\capmat}{\mathcal{C}}
\DeclareMathOperator{\capmatg}{\mathcal{C}^\gamma}
\renewcommand{\epsilon}{\varepsilon}
\DeclareMathOperator{\dd}{d\!}
\renewcommand{\tilde}{\widetilde}
\renewcommand{\hat}{\widehat}
\DeclareMathOperator{\iL}{{\mathsf{L}}}
\DeclareMathOperator{\iR}{{\mathsf{R}}}
\DeclareMathOperator{\iLR}{{\mathsf{L},\mathsf{R}}}
\begin{document}
\title[Stability of the non-Hermitian skin effect]{Stability of the non-Hermitian skin effect}

 \author[H. Ammari]{Habib Ammari}
\address{\parbox{\linewidth}{Habib Ammari\\
 ETH Z\"urich, Department of Mathematics, Rämistrasse 101, 8092 Z\"urich, Switzerland.}}
\email{habib.ammari@math.ethz.ch}
\thanks{}

\author[S. Barandun]{Silvio Barandun}
 \address{\parbox{\linewidth}{Silvio Barandun\\
 ETH Z\"urich, Department of Mathematics, Rämistrasse 101, 8092 Z\"urich, Switzerland.}}
 \email{silvio.barandun@sam.math.ethz.ch}

\author[B. Davies]{Bryn Davies}
 \address{\parbox{\linewidth}{Bryn Davies\\
Department of Mathematics, Imperial College London, 180 Queen's Gate, London SW7~2AZ, UK.}}
\email{bryn.davies@imperial.ac.uk}

 \author[E.O. Hiltunen]{Erik Orvehed Hiltunen}
\address{\parbox{\linewidth}{Erik Orvehed Hiltunen\\
Department of Mathematics, Yale University, 10 Hillhouse Ave,
New Haven, CT~06511, USA.}}
\email{erik.hiltunen@yale.edu}

\author[P. Liu]{Ping Liu}
 \address{\parbox{\linewidth}{Ping Liu\\
 ETH Z\"urich, Department of Mathematics, Rämistrasse 101, 8092 Z\"urich, Switzerland.}}
\email{ping.liu@sam.math.ethz.ch}

\maketitle

\begin{abstract}
This paper shows that the skin effect in systems of non-Hermitian subwavelength resonators is robust with respect to random imperfections in the system. The subwavelength resonators are highly contrasting material inclusions that resonate in a low-frequency regime. The non-Hermiticity is due to the introduction of an imaginary gauge potential, which leads to a skin effect that is manifested by the system's eigenmodes accumulating at one edge of the structure. We elucidate the topological protection of the associated (real) eigenfrequencies and illustrate the competition between the two different localisation effects present when the system is randomly perturbed: the non-Hermitian skin effect and the disorder-induced Anderson localisation. We show that, as the strength of the disorder increases, more and more eigenmodes become localised in the bulk. Our results are based on an asymptotic matrix model for subwavelength physics and can be generalised also to tight-binding models in condensed matter theory. 

\end{abstract}

\date{}

\bigskip

\noindent \textbf{Keywords.}   Non-Hermitian systems,  non-Hermitian skin effect,  subwavelength resonators, imaginary gauge potential, Toeplitz matrix, eigenvector condensation, Anderson localisation, stability analysis, disorder-induced phase transition.\par

\bigskip

\noindent \textbf{AMS Subject classifications.}
35B34, 
47B28, 
35P25, 
35C20, 
81Q12.  
15A18, 
15B05, 
\\

\section{Introduction}

The skin effect is the phenomenon whereby a large proportion of the bulk eigenmodes of a non-Hermitian system are localised at one edge of an open chain \cite{okuma,yuto}. In subwavelength physics, it emerges in an array of subwavelength resonators when an imaginary gauge potential is introduced inside the resonators, which are much smaller than the operating wavelength. The resonance of these subwavelength structures (whose dimensions are substantially smaller than the operating wavelength) is essential as without exciting the structure's subwavelength resonances the effect of the imaginary gauge potential would be negligible.

In systems of subwavelength resonators, the skin effect phenomenon is unique to non-Hermitian systems with non-reciprocal coupling. While localisation of specific eigenmodes can be achieved in other (\emph{e.g.} Hermitian) systems, the skin effect is characterised by having a large number of the modes (which scales with the size of the system) localised at one edge of the system. This phenomenon has been realised experimentally in both photonic and phononic systems \cite{ghatak.brandenbourger.ea2020Observation,skinadd1,skinadd2, skinadd3,kai,kawabata}. It significantly advances the field of active metamaterials and opens new avenues to channel and manipulate energy at subwavelength scales \cite{applications}.

The non-Hermitian skin effect was first introduced in condensed matter physics as a non-Hermitian extension of the Anderson model of localisation \cite{hatano.nelson1996Localization}. The imaginary gauge potential leads to the simultaneous condensation of an extensive number of bulk eigenmodes, all in the same direction \cite{hatano,yokomizo.yoda.ea2022NonHermitian, rivero.feng.ea2022Imaginary,longhi}. The tight binding models used in condensed matter theory share many fundamental similarities with the one-dimensional subwavelength classical wave system considered here. 

In a recent work \cite{ammari2023mathematical}, the non-Hermitian skin effect in the subwavelength regime was studied using first-principle mathematical analysis. One-dimensional systems of subwavelength resonators were considered, with an imaginary gauge potential added to break Hermiticity. Explicit asymptotic expressions for the subwavelength eigenfrequencies and eigenmodes were obtained using a gauge capacitance matrix formulation of the problem (which is a reformulation of the standard capacitance matrices that are commonplace in Hermitian subwavelength physics and electrostatics). Moreover, the exponential decay of eigenmodes and their accumulation at one edge of the structure (the non-Hermitian skin effect) was shown to be induced by the Fredholm index of an associated Toeplitz operator. A remaining open question is whether the skin effect is stable with respect to disorder. This important problem has been subject to recent debate in the physics and engineering communities \cite{okuma}. 

In this paper, we prove the robustness of the non-Hermitian skin effect with respect to random imperfections in the system. Based on delicate eigenvalue and eigenvector analysis of perturbed ``almost-Toeplitz'' matrices, we quantify the stability of the non-Hermitian skin effect. Moreover, we illustrate the competition between the non-Hermitian skin effect and Anderson localisation. Anderson localisation here refers to strong localisation of eigenmodes in the bulk (at subwavelength scales) that it is induced by disorder \cite{anderson1958absence}. We observe that, as the disorder strength increases, more and more eigenmodes are  localised in the bulk. This leads to a disorder-induced phase transition (in terms of the disorder strength) between  accumulation at one edge of the structure and localisation in the bulk. As far as we know, these findings provide the first justification to the experimental results discussed in \cite{andersoninterplay1, andersoninterplay2,andersoninterplay3}. It also extends the Anderson localisation in systems of subwavelength resonators \cite{ouranderson2022} to the non-Hermitian case. On the other hand, we also elucidate the topological protection of the (real) eigenfrequencies associated with the eigenmodes that accumulate at one edge of the structure. All of these eigenfrequencies stay inside a region of the complex plane with nontrivial winding number and can, consequently, be said to be topologically protected. Conversely, the eigenfrequencies corresponding to eigenmodes that are localised in the bulk fall outside of this region. 

The paper is organised as follows. In Section \ref{sect1}, we present the mathematical setup of the problem and recall its discrete formulation which provides approximations of the eigenfrequencies and eigenmodes of a finite chain of subwavelength resonators in terms of the eigenvalues and eigenvectors of the gauge capacitance matrix. Given this discrete formulation and the effect of uncertainties in the positions of the resonators or their material parameters, we can reduce the stability analysis to the analysis of a perturbed almost-Toeplitz matrices. Section \ref{sect2} is devoted to the stability analysis of the eigenvalues while in Section \ref{sect3} we prove the stability of the eigenvectors and show their exponential decay and condensation at one edge of the structure. In Section \ref{sect4}, we numerically illustrate our main findings in this paper. Moreover, we  show how condensation of the eigenmodes at the edge and localisation in the bulk are competing effects and present topologically-induced phase transition diagrams in terms of the strength of the disorder. We show numerically that as the strength of the disorder increases, the number of eigenmodes localised in the bulk increases. We also elucidate the fact that the non-trivial winding of the symbol of the associated Toeplitz operator at the eigenfrequencies protects an extensive number of associated eigenmodes from localisation. The paper ends with some concluding remarks and interesting generalisations of the results. 

\section{Non-Hermitian skin effect} \label{sect1}
We begin this section by introducing the setting and recalling results from \cite{ammari2023mathematical} on the non-Hermitian skin effect without disorder. In \Cref{sec:rand}, we introduce the disordered model which will be studied in subsequent sections.

\subsection{Problem formulation}
We consider a one-dimensional chain of $N$ disjoint identical subwavelength resonators $D_i\coloneqq (x_i^{\iL},x_i^{\iR})$, where $(x_i^{\iLR})_{1\<i\<N} \subset \R$ are the $2N$ extremities satisfying $x_i^{\iL} < x_i^{\iR} <  x_{i+1}^{\iL}$ for any $1\leq i \leq N$. We fix the coordinates such that $x_1^{\iL}=0$. We also denote by  $\ell_i = x_i^{\iR} - x_i^{\iL}$ the length of each of the resonators,  and by $s_i= x_{i+1}^{\iL} -x_i^{\iR}$ the spacing between the $i$-th and $(i+1)$-th resonators. The system is illustrated in \cref{fig:setting}. We use 
\begin{align*}
   D\coloneqq \bigcup_{i=1}^N(x_i^{\iL},x_i^{\iR})
\end{align*}
to symbolise the set of subwavelength resonators. In this paper, we only consider systems of equally spaced identical resonators, that is,
\begin{align*}
    \ell_i = \ell \in \R_{>0}\text{ for all } 1\leq i\leq N \quad \text{and} \quad s_i = s \in \R_{>0}  \text{ for all } 1\leq i\leq N-1.
\end{align*}
This will simplify the formulas in our subsequent analysis and is sufficient to understand the fundamental mechanisms  behind the skin and localisation effects we are interested in.

\begin{figure}[htb]
    \centering
    \begin{adjustbox}{width=\textwidth}
    \begin{tikzpicture}
        \coordinate (x1l) at (1,0);
        \path (x1l) +(1,0) coordinate (x1r);
        \path (x1r) +(0.5,0.7) coordinate (s1);
        \path (x1r) +(1,0) coordinate (x2l);
        \path (x2l) +(1,0) coordinate (x2r);
        \path (x2r) +(0.5,0.7) coordinate (s2);
        \path (x2r) +(1,0) coordinate (x3l);
        \path (x3l) +(1,0) coordinate (x3r);
        \path (x3r) +(.5,0.7) coordinate (s3);
        \path (x3r) +(1,0) coordinate (x4l);
        \path (x4l) +(1,0) coordinate (x4r);
        \path (x4r) +(0.5,0.7) coordinate (s4);
        \path (x4r) +(1,0) coordinate (dots);
        \path (dots) +(1,0) coordinate (x5l);
        \path (x5l) +(1,0) coordinate (x5r);
        \path (x5r) +(1.,0) coordinate (x6l);
        \path (x5r) +(0.5,0.7) coordinate (s5);
        \path (x6l) +(1,0) coordinate (x6r);
        \path (x6r) +(1.,0) coordinate (x7l);
        \path (x6r) +(0.5,0.7) coordinate (s6);
        \path (x7l) +(1,0) coordinate (x7r);
        \path (x7r) +(1.,0) coordinate (x8l);
        \path (x7r) +(0.5,0.7) coordinate (s7);
        \path (x8l) +(1,0) coordinate (x8r);
        \draw[ultra thick] (x1l) -- (x1r);
        \node[anchor=north] (label1) at (x1l) {$x_1^{\iL}$};
        \node[anchor=north] (label1) at (x1r) {$x_1^{\iR}$};
        \node[anchor=south] (label1) at ($(x1l)!0.5!(x1r)$) {$\ell$};
        \draw[dotted,|-|] ($(x1r)+(0,0.25)$) -- ($(x2l)+(0,0.25)$);
        \draw[ultra thick] (x2l) -- (x2r);
        \node[anchor=north] (label1) at (x2l) {$x_2^{\iL}$};
        \node[anchor=north] (label1) at (x2r) {$x_2^{\iR}$};
        \node[anchor=south] (label1) at ($(x2l)!0.5!(x2r)$) {$\ell$};
        \draw[dotted,|-|] ($(x2r)+(0,0.25)$) -- ($(x3l)+(0,0.25)$);
        \draw[ultra thick] (x3l) -- (x3r);
        \node[anchor=north] (label1) at (x3l) {$x_3^{\iL}$};
        \node[anchor=north] (label1) at (x3r) {$x_3^{\iR}$};
        \node[anchor=south] (label1) at ($(x3l)!0.5!(x3r)$) {$\ell$};
        \draw[dotted,|-|] ($(x3r)+(0,0.25)$) -- ($(x4l)+(0,0.25)$);
        \node (dots) at (dots) {\dots};
        \draw[ultra thick] (x4l) -- (x4r);
        \node[anchor=north] (label1) at (x4l) {$x_4^{\iL}$};
        \node[anchor=north] (label1) at (x4r) {$x_4^{\iR}$};
        \node[anchor=south] (label1) at ($(x4l)!0.5!(x4r)$) {$\ell$};
        \draw[dotted,|-|] ($(x4r)+(0,0.25)$) -- ($(dots)+(-.25,0.25)$);
        \draw[ultra thick] (x5l) -- (x5r);
        \node[anchor=north] (label1) at (x5l) {$x_{N-3}^{\iL}$};
        \node[anchor=north] (label1) at (x5r) {$x_{N-3}^{\iR}$};
        \node[anchor=south] (label1) at ($(x5l)!0.5!(x5r)$) {$\ell$};
        \draw[dotted,|-|] ($(x5r)+(0,0.25)$) -- ($(x6l)+(0,0.25)$);
        \draw[ultra thick] (x6l) -- (x6r);
        \node[anchor=north] (label1) at (x6l) {$x_{N-2}^{\iL}$};
        \node[anchor=north] (label1) at (x6r) {$x_{N-2}^{\iR}$};
        \node[anchor=south] (label1) at ($(x6l)!0.5!(x6r)$) {$\ell$};
        \draw[dotted,|-|] ($(x6r)+(0,0.25)$) -- ($(x7l)+(0,0.25)$);
        \draw[ultra thick] (x7l) -- (x7r);
        \node[anchor=north] (label1) at (x7l) {$x_{N-1}^{\iL}$};
        \node[anchor=north] (label1) at (x7r) {$x_{N-1}^{\iR}$};
        \node[anchor=south] (label1) at ($(x7l)!0.5!(x7r)$) {$\ell$};
        \draw[dotted,|-|] ($(x7r)+(0,0.25)$) -- ($(x8l)+(0,0.25)$);
        \draw[ultra thick] (x8l) -- (x8r);
        \node[anchor=north] (label1) at (x8l) {$x_{N}^{\iL}$};
        \node[anchor=north] (label1) at (x8r) {$x_{N}^{\iR}$};
        \node[anchor=south] (label1) at ($(x8l)!0.5!(x8r)$) {$\ell$};
        \node[anchor=north] (label1) at (s1) {$s$};
        \node[anchor=north] (label1) at (s2) {$s$};
        \node[anchor=north] (label1) at (s3) {$s$};
        \node[anchor=north] (label1) at (s4) {$s$};
        \node[anchor=north] (label1) at (s5) {$s$};
        \node[anchor=north] (label1) at (s6) {$s$};
        \node[anchor=north] (label1) at (s7) {$s$};
    \end{tikzpicture}
    \end{adjustbox}
    \caption{A chain of $N$ one-dimensional subwavelength resonators, with length $\ell$ and spacing $s$.}
    \label{fig:setting}
\end{figure}
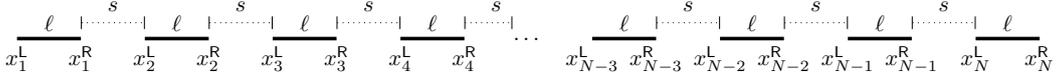

In this work, we consider the following one-dimensional 
damped wave equation where the damping acts in the space dimension instead of the time dimension:
\begin{align}
    -\frac{\omega^{2}}{\kappa(x)}u(x)- \gamma(x) \frac{\dd}{\dd x}u(x)-\frac{\dd}{\dd x}\left( \frac{1}{\rho(x)}\frac{\dd}{\dd
    x}  u(x)\right) =0,\qquad x \in\R,
    \label{eq: gen Strum-Liouville}
\end{align}
for a piecewise constant damping coefficient
\begin{align}\label{equ:nonhermitiancoeff1}
\gamma(x) = \begin{dcases}
    \gamma,\quad x\in D,\\
    0,\quad x \in \R\setminus D.
\end{dcases}
\end{align}
The parameter $\gamma$ extends the usual scalar wave equation to a generalised Strum--Liouville equation via the introduction of an imaginary gauge potential \cite{yokomizo.yoda.ea2022NonHermitiana}. 
The material parameters $\kappa(x)$ and $\rho(x)$ are  piecewise constant
\begin{align*}
    \kappa(x)=
    \begin{dcases}
        \kappa_b & x\in D,\\
        \kappa&  x\in\R\setminus D,
    \end{dcases}\quad\text{and}\quad
    \rho(x)=
    \begin{dcases}
        \rho_b & x\in D,\\
        \rho&  x\in\R\setminus D,
    \end{dcases}
\end{align*}
where the constants $\rho_b, \rho, \kappa, \kappa_b \in \R_{>0}$. The wave speeds inside the resonators $D$ and inside the background medium $\R\setminus D$, are denoted respectively by $v_b$ and $v$, 
the wave numbers respectively by $k_b$ and $k$, the frequency by $\omega$, and the contrast between the densities of the resonators and the background medium by $\delta$:
\begin{align}
    v_b:=\sqrt{\frac{\kappa_b}{\rho_b}}, \qquad v:=\sqrt{\frac{\kappa}{\rho}},\qquad
    k_b:=\frac{\omega}{v_b},\qquad k:=\frac{\omega}{v},\qquad
    \delta:=\frac{\rho_b}{\rho}.
\end{align}

We are interested in the resonances $\omega\in\C$ such that \eqref{eq: gen Strum-Liouville} has a non-trivial solution in a high-contrast, low-frequency (subwavelength) regime. This regime is typically characterised by letting the contrast parameter $\delta\to 0$ and looking for solutions which are such that $\omega \to 0$ as $\delta\to 0$. One consequence of this asymptotic ansatz is that it lends itself to characterisation using asymptotic analysis \cite{ammari.davies.ea2021Functional}. Note that this limit recovers subwavelength resonances, while keeping the size of the resonators fixed. 

In \cite{ammari2023mathematical}, an asymptotic analysis in the subwavelength limit was performed on the system of non-Hermitian one-dimensional subwavelength resonators considered here. It was shown that the resonances are given by the eigenstates of the \emph{gauge capacitance matrix} $\capmat^\gamma$. This is a modified version of the conventional capacitance matrix that is often used to characterise many-body low-frequency resonance problems; see, for instance, \cite{ammari.davies.ea2021Functional}.

The following results are from \cite{ammari2023mathematical}.
\begin{theorem} \label{thm:gauge}
Let the gauge capacitance matrix $\capmat^\gamma =(\capmat_{i,j}^\gamma)^N_{i,j=1}$ be defined by
\begin{align}
        \label{eq: cap mat ESI}
            \capmat_{i,j}^\gamma \coloneqq \begin{dcases}
                \frac{\gamma}{s} \frac{1}{1-e^{-\gamma\ell}}, & i=j=1,\\
                \frac{\gamma}{s} \coth(\gamma\ell/2), & 1< i=j< N,\\
                \pm\frac{\gamma}{ s} \frac{1}{1-e^{\pm\gamma}}, & 1\leq i=j \pm 1\leq N,\\
                -\frac{\gamma}{ s} \frac{1}{1-e^{\gamma\ell}}, & i=j=N.\\
            \end{dcases}
            \end{align} 
            Then, 
            \begin{enumerate}
            \item[(i)] All the eigenvalues of $\capmat^\gamma$ are real. They are given by
\begin{align}
    \lambda_1 &= 0,\nonumber\\
    \lambda_k &= \frac{\gamma}{s} \coth(\gamma\ell/2)+\frac{2\abs{\gamma}}{s}\frac{e^{\frac{\gamma\ell}{2}}}{\vert e^{\gamma\ell}-1\vert}\cos\left(\frac{\pi}{N}k\right), \quad 2\leq k\leq N . \label{eq: eigenvalues capmat}
\end{align}
Furthermore, the associated eigenvectors $\bm a_k$ satisfy the following inequality, for $2\leq k\leq N$
\begin{align}
\vert \bm a_k^{(i)}\vert \leq \kappa_k e^{-\gamma\ell\frac{i-1}{2}}\quad \text{for all } 1\leq i\leq N \label{eq: decay for eigemodes},
\end{align}
for some $\kappa_k\leq (1+e^{\frac{\gamma\ell}{2}})^2$. Here, 
 $\bm a_k^{(i)}$ denotes the $i$-th entry of the eigenvector  $\bm a_k$;
 \item[(ii)] The  $N$ subwavelength eigenfrequencies $\omega_i$ of (\ref{eq: gen Strum-Liouville}) satisfy, as $\delta\to0$,
    \begin{align*}
        \omega_i =  v_b \sqrt{\delta\lambda_i} + \BO(\delta),
    \end{align*}
    where $(\lambda_i)_{1\leq i\leq N}$ are the eigenvalues of 
   $\capmat^\gamma$. Furthermore, let $u_i$ be a subwavelength eigenmode corresponding to $\omega_i$ and let $\bm a_i$ be the corresponding eigenvector of $\capmatg$. Then
        \begin{align*}
            u_i(x) = \sum_j \bm a_i^{(j)}V_j(x) + \BO(\delta),
        \end{align*}
        where $V_j$ are defined by 
        \begin{align}
    \begin{dcases}
        -\frac{\dd{^2}}{\dd x^2} V_i =0, & x\in\R\setminus\bigcup_{i=1}^N(x_i^{\iL},x_i^{\iR}), \\
        V_i(x)=\delta_{ij}, & x\in (x_j^{\iL},x_j^{\iR}),\\
        V_i(x) = \BO(1) & \mbox{as } \abs{x}\to\infty.
    \end{dcases}
    \label{eq: def V_i}
\end{align}    
    
\end{enumerate}  
\end{theorem}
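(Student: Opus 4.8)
The plan is to prove \Cref{thm:gauge} in the two natural steps: part~(i) is a statement about the explicit tridiagonal matrix $\capmat^\gamma$, while part~(ii) is the subwavelength asymptotic reduction of \eqref{eq: gen Strum-Liouville} to that matrix, carried out in \cite{ammari2023mathematical} via the layer-potential framework of \cite{ammari.davies.ea2021Functional}; I would therefore concentrate on~(i) and only outline the skeleton of~(ii). For~(i), note first that $\capmat^\gamma$ is tridiagonal with constant super- and sub-diagonal entries $b=-\tfrac{\gamma}{s}(1-e^{-\gamma\ell})^{-1}$ and $c=\tfrac{\gamma}{s}(1-e^{\gamma\ell})^{-1}$, and that $bc=\tfrac{\gamma^{2}}{4s^{2}\sinh^{2}(\gamma\ell/2)}>0$ while $c/b=e^{-\gamma\ell}$. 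Since $bc>0$, the diagonal conjugation by $D=\diag\big(e^{-\gamma\ell(i-1)/2}\big)_{i=1}^{N}$ sends $\capmat^\gamma$ to a real symmetric tridiagonal matrix, so all its eigenvalues are real. A direct computation then shows that every row of $\capmat^\gamma$ sums to zero — the diagonal corner entries in \eqref{eq: cap mat ESI} are precisely those that cancel the adjacent off-diagonal entries — hence $\capmat^\gamma(1,\dots,1)^{\top}=0$, that is $\lambda_{1}=0$ with $\bm a_{1}=(1,\dots,1)^{\top}$.

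For the remaining $N-1$ eigenpairs I would rewrite $\capmat^\gamma\bm a=\lambda\bm a$ as the three-term recurrence $c\,\bm a^{(i-1)}+\big(\tfrac{\gamma}{s}\coth(\gamma\ell/2)-\lambda\big)\bm a^{(i)}+b\,\bm a^{(i+1)}=0$ for $2\le i\le N-1$, together with the first and last rows as boundary equations. Substituting the ansatz $\bm a^{(i)}=\mu^{\,i-1}$ into the interior recurrence gives a quadratic in $\mu$ whose two roots satisfy $\mu_{+}\mu_{-}=c/b=e^{-\gamma\ell}$; for the relevant values of $\lambda$ its discriminant is nonpositive (one finds $(d-\lambda_{k})^{2}=4bc\cos^{2}(\pi k/N)$ at the candidate eigenvalues $\lambda_k$), so $\mu_{+}$ and $\mu_{-}$ are complex conjugate with $\babs{\mu_{\pm}}=\sqrt{c/b}=e^{-\gamma\ell/2}$. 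Writing the general interior solution as $\bm a^{(i)}=\alpha\mu_{+}^{\,i-1}+\beta\mu_{-}^{\,i-1}$ and imposing the two boundary equations as a homogeneous system in $(\alpha,\beta)$, the vanishing of its $2\times 2$ determinant — the step that uses the precise corner entries of $\capmat^\gamma$ — quantises $\lambda$ to exactly the $N-1$ values in \eqref{eq: eigenvalues capmat}. The decay bound is then immediate from $\babs{\mu_{\pm}}=e^{-\gamma\ell/2}$, since $\babs{\bm a_{k}^{(i)}}\le(\babs{\alpha}+\babs{\beta})e^{-\gamma\ell(i-1)/2}$, and bounding $\babs{\alpha}+\babs{\beta}$ in terms of the normalisation of $\bm a_{k}$ yields $\kappa_{k}\le(1+e^{\gamma\ell/2})^{2}$, which is \eqref{eq: decay for eigemodes}. (Equivalently, writing $\bm a_{k}=D\widehat{\bm a}_{k}$ for the symmetrised eigenvector $\widehat{\bm a}_{k}$ makes the decay visible as the size of $D$.)

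For~(ii), one solves \eqref{eq: gen Strum-Liouville} explicitly on each of the $2N+1$ intervals into which the endpoints $\{x_{i}^{\iLR}\}$ divide $\R$, imposing the decay condition at $\pm\infty$ and the transmission conditions (continuity of $u$ and of $\rho^{-1}u'$), so that the resonance condition becomes the vanishing of a determinant in $\omega$ and $\delta$. Expanding this under the subwavelength ansatz $\omega=\BO(\sqrt{\delta})$ as $\delta\to 0$: at leading order the field is constant on each resonator, with values $\bm a=(\bm a^{(j)})_{j}$, and equals $\sum_{j}\bm a^{(j)}V_{j}$ off the resonators with $V_{j}$ as in \eqref{eq: def V_i}, and the solvability of the $\BO(\delta)$-corrections of the transmission conditions is exactly the eigenvalue relation $\capmat^\gamma\bm a=\tfrac{\omega^{2}}{v_{b}^{2}\delta}\bm a+\BO(\delta)$. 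This forces $\tfrac{\omega_{i}^{2}}{v_{b}^{2}\delta}=\lambda_{i}+\BO(\delta)$, i.e.\ $\omega_{i}=v_{b}\sqrt{\delta\lambda_{i}}+\BO(\delta)$, with mode $u_{i}=\sum_{j}\bm a_{i}^{(j)}V_{j}+\BO(\delta)$.

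I expect the main obstacle to be twofold. In~(i), one must check that the first and last rows really do collapse to the clean quantisation that selects precisely these $N-1$ values of $\lambda$ and no others — a short but not automatic computation, entirely dependent on the explicit corner entries of $\capmat^\gamma$. In~(ii), the delicate point is making the $\delta\to 0$ expansion rigorous in the presence of the non-Hermitian gauge term $\gamma$; this is the substantive analytic part and is precisely what \cite{ammari2023mathematical} carries out.
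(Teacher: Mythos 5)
The paper offers no proof of this theorem --- it is imported verbatim from \cite{ammari2023mathematical} --- but your outline reproduces the strategy of that reference and of \cref{lem:eigenvalues0} (its Lemma A.6): similarity to a real symmetric tridiagonal matrix for reality of the spectrum, zero row sums for $\lambda_1=0$ with the constant eigenvector, the three-term recurrence plus corner boundary conditions for the quantisation \eqref{eq: eigenvalues capmat}, the modulus $e^{-\gamma\ell/2}$ of the characteristic roots for the decay bound \eqref{eq: decay for eigemodes}, and the capacitance-matrix asymptotics for part (ii). Your sketch is correct in approach; the only points needing care are the index convention (the recurrence yields $\cos(\pi(k-1)/N)$ as in \cref{lem:eigenvalues0}, versus $\cos(\pi k/N)$ in the theorem as stated) and the fact that the explicit constant $\kappa_k\le(1+e^{\gamma\ell/2})^2$ is most cleanly read off the closed-form eigenvector \eqref{equ:eigenvectorequ1} rather than from a normalisation argument.
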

From Theorem \ref{thm:gauge}, we can see that $\capmatg$
is \emph{almost} Toeplitz (in the sense that it has constant diagonals other than deterministic perturbations in the corners) and its eigenvectors display exponential decay both with respect to the site index $i$ and the factor $\gamma$. This shows the condensation of bulk eigenmodes at one of the edges of the system of subwavelength resonators. The exponential decay of the eigenvectors is directly linked with a topological property. Let $T$ be a Toeplitz operator with continuous symbol $f_T$ and $T_N$ be the truncation of $T$ to the upper-left $N\times N$ submatrix. Then, if $f_T$ is sufficiently smooth, standard Toeplitz theory says that any eigenvalue $\lambda\in\C$ which is such that the winding $w(f_T,\lambda)$ of $f_T$ at $\lambda$ is negative will be such that the corresponding eigenvector decays exponentially. We show this topological region in \cref{fig: winding region} for our system. In \cref{fig: condendensed eigenvectors unperturbed} we plot the eigenvectors superimposed on one another to portray condensation on the left edge of the structure; $\lambda_1=0$ corresponds to a trivial (constant) eigenvector while all other eigenvectors are exponentially localised to the left edge of the structure.

\begin{figure}[h]
    \begin{subfigure}[t]{0.45\textwidth}
        \centering
    \includegraphics[height=0.75\textwidth]{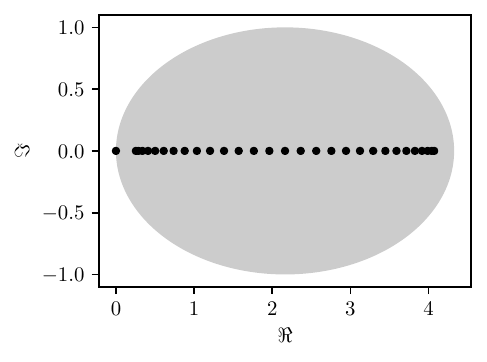}
    \caption{Shaded in grey is the region where the symbol $f_T$ has negative winding, for $T$ the Toeplitz operator corresponding to the capacitance matrix $\mathcal{C}^\gamma$. The black dots show the spectrum of $\mathcal{C}^\gamma$.}\label{fig: winding region}
    \end{subfigure}\hfill
    \begin{subfigure}[t]{0.45\textwidth}
        \centering
    \includegraphics[height=0.75\textwidth]{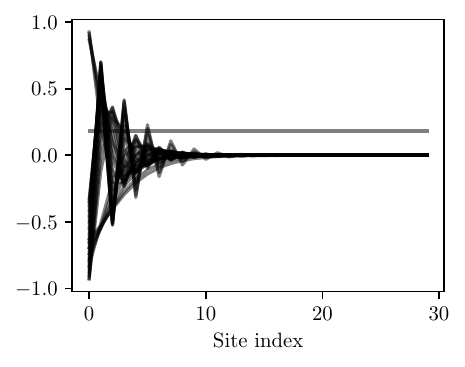}
    \caption{Eigenvector condensation on the left edge of the structure. The 30 modes are plotted, superimposed. The constant eigenvector is the one associated to the eigenvalue $0$.}\label{fig: condendensed eigenvectors unperturbed}
    \end{subfigure}
    \caption{Numerical simulations for an unperturbed system of $N = 30$ resonators with $s = \ell = 1$ and $\gamma=1$.}
\end{figure}

\subsection{Randomly perturbed gauge capacitance matrix} \label{sec:rand}
To simplify the notation, we denote the tridiagonal Toeplitz matrix with deterministic perturbations on diagonal corners by 
\begin{equation}\label{equ:toeplitzcornerperturbed1}
    T_N^{(a, b)}=\left(\begin{array}{ccccccc}
        \alpha+a & \beta & 0 & 0 & \ldots & 0 & 0\\
        \eta & \alpha & \beta & 0 & \ldots & 0 & 0 \\
        0 & \eta & \alpha & \beta & \ldots & 0 & 0 \\
        \ldots & \ldots & \ldots & \ldots & \ldots & \ldots & \ldots \\
        \ldots & \ldots & \ldots & \ldots & \ldots & \alpha & \beta \\
        0 & 0 & 0 & 0 & \ldots & \eta & \alpha+b
    \end{array}\right).
\end{equation}
Throughout the paper, we only consider $T_N^{(a,b)}$ with $\alpha, \eta, \beta, a, b\in \mathbb R$ and $\eta\beta>0$. In particular, observe from (\ref{eq: cap mat ESI}) that $$\capmatg = T_N^{(\eta, \beta)}$$ and $\eta, \alpha, \beta$ are such that $$a=\eta,\quad b=\beta,\quad \eta \beta > 0 \quad \text{and,} \quad  \eta+\alpha+\beta=0.$$ Note also that $T_{N}^{(0,0)}$ is a tridiagonal Toeplitz matrix.

In order to study the stability of the non-Hermitian skin effect with respect to random imperfections in the system design, we either add random errors to the positions of the resonators (keeping the length of the resonators unchanged) or to the $\gamma$-term and then repeatedly compute the subwavelength eigenfrequencies and eigenmodes. The perturbations in the positions and in the values of the $\gamma$-parameter are drawn at random from uniform distributions with zero-mean values. 
Since these random perturbations affect only the tridiagonal entries of the gauge capacitance matrix $\hat{\capmat}^\gamma$ of the randomly perturbed system, we can write in both cases that
\begin{equation}\label{equ:toeplitzallperturbed1}
\hat{\capmat}^\gamma = \widehat T_N^{(a,b)}:=\left(\begin{array}{cccccc}
    \alpha + a +\epsilon_{\alpha, 1} & \beta+\epsilon_{\beta, 1}  & 0  & \ldots & 0 & 0\\
    \eta +\epsilon_{\eta, 2}  & \alpha + \epsilon_{\alpha, 2} & \beta +\epsilon_{\beta, 2}  & \ldots & 0 & 0 \\
    0 & \eta+\epsilon_{\eta, 3} & \alpha+\epsilon_{\alpha, 3}  & \ldots & 0 & 0 \\
    \ldots & \ldots & \ldots & \ldots & \ldots  & \ldots \\
    \ldots & \ldots & \ldots  & \ldots & \alpha+\epsilon_{\alpha, N-1} & \beta+\epsilon_{\beta, N-1} \\
    0 & 0 & 0 & \ldots & \eta+\epsilon_{\eta, N} & \alpha+b+\epsilon_{\alpha,N}
\end{array}\right)
\end{equation}
with $a=\eta, b=\beta$. 

\section{Stability of eigenvalues} \label{sect2}
In this section, we derive stability results for the eigenvalues of $T_N^{(a,b)}$. This relies on a crucial observation that the tridiagonal matrix $T_N^{(a,b)}$ always has same eigenvalues as a Hermitian matrix. Thus we first recall the following well known Weyl theorem for the stability of eigenvalues of Hermitian matrices; see \cite[Theorem 1.1.7]{reis2011matrix}, \cite[Theorem 8.1.6]{golub2013matrix} and \cite[Theorem 10.3.1]{parlett1998symmetric}. We also refer the reader to \cite{ipsen2009refined} for some more refined results. 

\begin{theorem}\label{thm:weyltheorem1}
Let $A$ and $E$ be $N \times N$ Hermitian matrices. For $k \in\{1, \ldots, N\}$ denote by $\lambda_k(A+E), \lambda_k(A)$ the $k$-th eigenvalue of $A+E$ and $A$, respectively. Assume these to be arranged in a decreasing sequence. Then 
\[
\babs{\lambda_k(A+E)-\lambda_k(A)}\leq \Vert{E}\Vert_2.
\]
\end{theorem}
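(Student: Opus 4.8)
The plan is to deduce the bound from the Courant--Fischer min--max characterisation of the eigenvalues of a Hermitian matrix, which I take as known (it is exactly what is proved in the references cited just before the statement). Recall that for an $N\times N$ Hermitian matrix $M$ with eigenvalues $\lambda_1(M)\geq\cdots\geq\lambda_N(M)$,
\[
\lambda_k(M)=\max_{\substack{S\subseteq\C^N\\ \dim S=k}}\ \min_{\substack{x\in S\\ \|x\|_2=1}} x^*Mx,
\]
the maximum being over all $k$-dimensional subspaces $S$, and the maximum is attained by $S=\operatorname{span}(v_1,\dots,v_k)$ for any orthonormal eigenbasis $v_1,\dots,v_N$ of $M$ ordered so that the eigenvalues decrease.

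First I would record the elementary estimate $\babs{x^*Ex}\leq\|E\|_2$ for every unit vector $x$; this is immediate from $\|E\|_2=\max_{\|x\|_2=1}\babs{x^*Ex}$, valid since $E$ is Hermitian (or from Cauchy--Schwarz). Hence, for any subspace $S$ and any unit $x\in S$,
\[
x^*Ax-\|E\|_2\ \leq\ x^*(A+E)x\ \leq\ x^*Ax+\|E\|_2 .
\]

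Next I would fix $k$ and choose a $k$-dimensional subspace $S_\star$ attaining the maximum in the Courant--Fischer formula for $A$, so that $\min_{x\in S_\star,\,\|x\|_2=1}x^*Ax=\lambda_k(A)$. Since $\lambda_k(A+E)$ is a maximum over all $k$-dimensional subspaces, it dominates the value obtained on $S_\star$, and using the left-hand inequality above,
\[
\lambda_k(A+E)\ \geq\ \min_{\substack{x\in S_\star\\ \|x\|_2=1}}x^*(A+E)x\ \geq\ \min_{\substack{x\in S_\star\\ \|x\|_2=1}}x^*Ax-\|E\|_2\ =\ \lambda_k(A)-\|E\|_2 .
\]
Applying the same argument with $A$ and $A+E$ interchanged, i.e. writing $A=(A+E)+(-E)$ and noting $\|-E\|_2=\|E\|_2$, gives $\lambda_k(A)\geq\lambda_k(A+E)-\|E\|_2$. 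Combining the two inequalities yields $\babs{\lambda_k(A+E)-\lambda_k(A)}\leq\|E\|_2$, as claimed.

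I do not expect a genuine obstacle here: the only points requiring a little care are recording the decreasing-ordering convention consistently and justifying the optimal subspace $S_\star$ (which is explicit via the eigenbasis of $A$, so no compactness argument is even needed). An essentially equivalent route would be to invoke Weyl's interlacing inequalities $\lambda_k(A+E)\leq\lambda_k(A)+\lambda_1(E)$ and $\lambda_k(A+E)\geq\lambda_k(A)+\lambda_N(E)$, together with $\lambda_1(E),\,-\lambda_N(E)\leq\|E\|_2$; I would mention this as a remark but carry out the min--max proof in full.
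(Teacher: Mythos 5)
Your proof is correct. Note that the paper does not actually prove this statement itself --- it is recalled as a classical result (Weyl's perturbation theorem) with citations to standard references --- and your Courant--Fischer min--max argument is precisely the standard proof found in those references, so there is nothing to compare against. All steps check out: the bound $\babs{x^*Ex}\leq\Vert E\Vert_2$ for unit vectors, the choice of the optimal subspace as the span of the top $k$ eigenvectors of $A$, and the symmetric application with $A$ and $A+E$ interchanged.
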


We next recall the following result \cite[Lemma A.6]{ammari2023mathematical} on the eigenvalues of $T_N^{(\eta, \beta)}$ where its proof comes from \cite{yueh2008explicit}. 

\begin{lemma}\label{lem:eigenvalues0}
Suppose that $\eta+\alpha+\beta=0$. Let $\lambda$ be an eigenvalue of $T_N^{(\eta, \beta)}$. Then, either $\lambda=\lambda_1:=0$ and the corresponding eigenvector is $\vect x_1=\mathbf{1}$ or
	\begin{equation}\label{equ:eigenvalueequ1}
	\lambda_k:=\alpha+2 \sqrt{\eta \beta } \cos \left(\frac{\pi}{N}(k-1)\right), \quad 2 \leq k \leq N,
	\end{equation}
	and the corresponding eigenvector is $\vect x_k$, 
 with entries
	\begin{equation}\label{equ:eigenvectorequ1}
	\vect x_k^{(j)}=\left(\frac{\eta}{\beta}\right)^{\frac{j-1}{2}}\left(\eta \sin \left(\frac{j(k-1) \pi}{N}\right)-\eta \sqrt{\frac{\eta}{\beta}} \sin \left(\frac{(j-1)(k-1) \pi}{N}\right)\right),\ j=1,\cdots, N.
	\end{equation}
\end{lemma}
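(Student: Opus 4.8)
The plan is to reduce the eigenproblem for $T_N^{(\eta,\beta)}$ to a constant-coefficient three-term recurrence supplemented by Neumann-type boundary conditions, whose eigenvalues and eigenvectors are classical. (Equivalently one may first conjugate $T_N^{(\eta,\beta)}$ by $D=\diag\big(1,\rho,\dots,\rho^{N-1}\big)$ with $\rho=\sqrt{\eta/\beta}$ — well defined since $\eta\beta>0$ — which turns $T_N^{(\eta,\beta)}$ into a real symmetric tridiagonal matrix with the same spectrum; this conjugation is exactly what produces the prefactor $(\eta/\beta)^{(j-1)/2}$ in \eqref{equ:eigenvectorequ1}.)

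First I would record the trivial mode: since $\eta+\alpha+\beta=0$ and the corner entries are $\alpha+\eta$ and $\alpha+\beta$, every component of $T_N^{(\eta,\beta)}\mathbf 1$ equals $\alpha+\eta+\beta=0$, so $\lambda_1=0$ with eigenvector $\mathbf x_1=\mathbf 1$. Next, for a general eigenpair $T_N^{(\eta,\beta)}\mathbf x=\lambda\mathbf x$, extend $\mathbf x=(x_1,\dots,x_N)$ by two ghost entries $x_0,x_{N+1}$. The interior rows $2\le j\le N-1$ read $\eta x_{j-1}+(\alpha-\lambda)x_j+\beta x_{j+1}=0$; the first row $(\alpha+\eta)x_1+\beta x_2=\lambda x_1$ is exactly this relation at $j=1$ provided $x_0=x_1$, and the last row is the relation at $j=N$ provided $x_{N+1}=x_N$. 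Hence, since $\eta\beta\neq0$, the eigenvectors of $T_N^{(\eta,\beta)}$ are precisely the nonzero sequences solving
\[
\beta x_{j+1}+(\alpha-\lambda)x_j+\eta x_{j-1}=0\quad (1\le j\le N),\qquad x_0=x_1,\quad x_{N+1}=x_N.
\]
Let $t_{\pm}$ be the roots of $\beta t^{2}+(\alpha-\lambda)t+\eta$; then $t_+t_-=\eta/\beta>0$, so $t_{\pm}=\rho\, e^{\pm\i\theta}$. In the generic case $t_+\neq t_-$, put $x_j=A t_{+}^{j}+B t_{-}^{j}$. The condition $x_0=x_1$ gives $A(1-t_+)+B(1-t_-)=0$; substituting into $x_{N+1}=x_N$ and simplifying (using that $t_{\pm}=1$ would force $\lambda=\alpha+\eta+\beta=0$, already handled) forces $t_{+}^{N}=t_{-}^{N}$, i.e. $e^{2\i N\theta}=1$, i.e. $\theta=(k-1)\pi/N$. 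Then $\lambda=\alpha+\beta(t_++t_-)=\alpha+2\sqrt{\eta\beta}\cos\theta$, which is \eqref{equ:eigenvalueequ1}; choosing $A,B$ to solve $x_0=x_1$ one may write $x_j=A t_{+}^{j}+B t_{-}^{j}=2\i\,\mathrm{Im}\big(A t_{+}^{j}\big)$, and expanding $t_{\pm}^{j}=\rho^{j}e^{\pm\i j\theta}$ then yields \eqref{equ:eigenvectorequ1} up to a nonzero scalar.

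It remains to check that these exhaust the spectrum, and this bookkeeping is the main obstacle. One must discard the values $\theta\in\{0,\pi\}$ (i.e. $k=1$ and $k=N+1$): there the characteristic polynomial has a double root, the ansatz must be replaced by $x_j=(A+Bj)t^{j}$, and a short direct computation shows the boundary conditions then force $\mathbf x=\mathbf 0$ (and merely recover $\mathbf x_1=\mathbf 1$ when $\eta=\beta$), so no eigenvalue is lost or double-counted. For $k=2,\dots,N$ the numbers $\cos\big((k-1)\pi/N\big)$ are pairwise distinct and lie in $(-1,1)$, hence differ from $-\alpha/(2\sqrt{\eta\beta})=(\eta+\beta)/(2\sqrt{\eta\beta})$, whose absolute value is $\ge 1$ by the AM–GM inequality and which is the only value that would give $\lambda=0$. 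Thus $\lambda_1,\lambda_2,\dots,\lambda_N$ are $N$ distinct eigenvalues, which is all of them. Everything except this counting is routine once one observes that the two corner perturbations $a=\eta$, $b=\beta$ are exactly what turns the boundary rows into the clean ghost-point identities $x_0=x_1$ and $x_{N+1}=x_N$.
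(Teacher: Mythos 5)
Your proof is correct and complete: the ghost-point reformulation (the corner perturbations $a=\eta$, $b=\beta$ being exactly what yields $x_0=x_1$ and $x_{N+1}=x_N$), the characteristic-root analysis with the exclusion of the double-root cases $\theta\in\{0,\pi\}$, and the final count of $N$ distinct eigenvalues all check out, and this is essentially the same classical solution of the three-term recurrence that the paper defers to via its citation of \cite{yueh2008explicit} rather than proving itself. The only cosmetic caveat is that when $\beta<0$ (the physically relevant sign for $\capmatg$) one has $\beta\sqrt{\eta/\beta}=-\sqrt{\eta\beta}$, so your eigenvector indexed by $\theta=(k-1)\pi/N$ pairs with the eigenvalue in \eqref{equ:eigenvalueequ1} only after the relabelling $k\mapsto N-k+2$; this sign ambiguity is already present in the lemma as stated and does not affect the spectrum or the decay estimates used later.
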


\medskip
Then we state our result on the stability of the eigenvalues of $T_{N}^{(a,b)}$ with $a, b\in \mathbb R$.
\begin{theorem}\label{thm:eigenvaluestability0}
The eigenvalues of $T_N^{(a,b)}$ and $\widehat T_N^{(a,b)}$ are all real numbers. Let $\{\lambda_k\}, \{\widehat \lambda_k \}$ be respectively the eigenvalues of $T_N^{(a,b)}$ and $\widehat T_N^{(a,b)}$, arranged in  decreasing sequences. Assuming that
\begin{equation}
\max_{j=1,\dots, N}\left(\babs{\epsilon_{\eta, j}}, \babs{\epsilon_{\alpha, j}}, \babs{\epsilon_{\beta, j}}\right)\eqqcolon \epsilon,
\end{equation}
then we have
\[
\babs{\widehat \lambda_k-\lambda_k}<C_1(\eta, \beta, \epsilon)\epsilon,
\]
where 
\begin{equation}\label{equ:eigenvaluestability2}
C_1(\eta, \beta, \epsilon)= \frac{\babs{\beta} +\babs{\eta} +\epsilon}{\sqrt{\beta \eta}}+1.
\end{equation}
\end{theorem}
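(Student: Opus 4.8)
The plan rests on the observation already flagged in the text: a real tridiagonal matrix whose mirror-symmetric off-diagonal pairs have positive products is similar, via a diagonal conjugation, to a real symmetric matrix, hence has real spectrum, and Weyl's inequality (\Cref{thm:weyltheorem1}) then applies to the two symmetrised matrices. (For general corner terms $a,b$, and for the randomly perturbed matrix, there is no closed-form analogue of \Cref{lem:eigenvalues0}, so such an indirect route is needed.) First I would conjugate $T_N^{(a,b)}$ by $D=\diag(1,r,r^2,\dots,r^{N-1})$ with $r=\sqrt{\eta/\beta}$ (well defined since $\eta\beta>0$): this produces a real symmetric tridiagonal matrix $S:=D^{-1}T_N^{(a,b)}D$ with the same diagonal as $T_N^{(a,b)}$ (hence the same corner entries $a,b$) and all off-diagonal entries equal to $\sgn(\eta)\sqrt{\eta\beta}$, which already shows $T_N^{(a,b)}$ has real eigenvalues. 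The same construction applies to $\widehat T_N^{(a,b)}$ provided $\epsilon$ is small enough that every product $(\beta+\epsilon_{\beta,i})(\eta+\epsilon_{\eta,i+1})$ stays positive (the regime of interest, e.g. $\epsilon<\min(\babs{\eta},\babs{\beta})$): conjugating by $\widehat D=\diag(\widehat d_1,\dots,\widehat d_N)$ with $\widehat d_{i+1}/\widehat d_i=\sqrt{(\eta+\epsilon_{\eta,i+1})/(\beta+\epsilon_{\beta,i})}$ yields a real symmetric tridiagonal $\widehat S$ whose diagonal is that of $\widehat T_N^{(a,b)}$ and whose $i$-th off-diagonal entry is $\sgn(\eta)\sqrt{(\eta+\epsilon_{\eta,i+1})(\beta+\epsilon_{\beta,i})}$; thus $\widehat T_N^{(a,b)}$ also has real spectrum.

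Since $S$ and $\widehat S$ are both Hermitian, \Cref{thm:weyltheorem1} with $A=S$ and $E=\widehat S-S$ gives $\babs{\widehat\lambda_k-\lambda_k}\le\btwonorm{\widehat S-S}$ for the decreasingly ordered eigenvalues, so it remains to estimate $\btwonorm{\widehat S-S}$. This difference is real symmetric tridiagonal: its diagonal entries are the $\epsilon_{\alpha,i}$, of modulus at most $\epsilon$, and its $i$-th off-diagonal entry is $\sgn(\eta)\big(\sqrt{(\eta+\epsilon_{\eta,i+1})(\beta+\epsilon_{\beta,i})}-\sqrt{\eta\beta}\big)$. Using $\babs{\sqrt x-\sqrt y}=\babs{x-y}/(\sqrt x+\sqrt y)\le\babs{x-y}/\sqrt{\eta\beta}$ together with $\babs{(\eta+\epsilon_{\eta,i+1})(\beta+\epsilon_{\beta,i})-\eta\beta}\le(\babs{\eta}+\babs{\beta}+\epsilon)\epsilon$, each off-diagonal entry is bounded by $(\babs{\eta}+\babs{\beta}+\epsilon)\epsilon/\sqrt{\eta\beta}$. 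Bounding $\btwonorm{\widehat S-S}$ by the maximal absolute row sum of this tridiagonal matrix (there being slack in these estimates, so the resulting inequality is strict) then yields $\babs{\widehat\lambda_k-\lambda_k}<C_1(\eta,\beta,\epsilon)\,\epsilon$ with $C_1$ as in \eqref{equ:eigenvaluestability2}.

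The load-bearing step is the symmetrisation: it delivers reality of both spectra and, more importantly, converts an eigenvalue-perturbation problem for non-normal matrices — where eigenvalues can be extremely sensitive — into the benign Hermitian setting where Weyl's bound holds with constant $1$. The remaining work (the square-root Lipschitz estimate and the tridiagonal norm bound) is routine, though one must carefully track the band structure and the signs of $\eta,\beta$. The one hypothesis to monitor throughout is the smallness of $\epsilon$ guaranteeing positivity of the perturbed off-diagonal products; outside that regime the symmetrisation, and indeed the reality of the perturbed spectrum, can both fail.
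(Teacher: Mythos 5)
Your argument is essentially the paper's own proof: symmetrise both matrices (the paper does this by observing that the characteristic polynomials of $T_N^{(a,b)}$ and of the symmetrised tridiagonal matrix coincide, you by an explicit diagonal similarity --- the same fact), apply Weyl's inequality (Theorem \ref{thm:weyltheorem1}) to the two Hermitian matrices, and bound the norm of the tridiagonal difference entrywise; you also rightly make explicit the positivity condition on the perturbed products $(\beta+\epsilon_{\beta,i})(\eta+\epsilon_{\eta,i+1})$ that the paper leaves implicit. The only point to watch is the final bookkeeping: your per-entry off-diagonal bound $\lvert\sqrt{x}-\sqrt{y}\rvert\le \lvert x-y\rvert/\sqrt{\eta\beta}$ combined with the row-sum (or, equivalently, the paper's three-term splitting) estimate yields the constant $1+2(\lvert\beta\rvert+\lvert\eta\rvert+\epsilon)/\sqrt{\eta\beta}$ rather than the stated $C_1$; recovering $C_1$ exactly requires the denominator $\sqrt{x}+\sqrt{y}$ to be close to $2\sqrt{\eta\beta}$, which is what the paper tacitly uses and which holds only under a further smallness assumption on $\epsilon$ beyond the one you already impose.
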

\begin{proof}
Note that all the $\lambda_k$'s are real since $T_N^{(a, b)}$ has the same eigenvalues as the Hermitian matrix
\begin{equation}\label{equ:proofeigenvaluestability0}
A= \left(\begin{array}{ccccccc}
	\alpha+a & \sqrt{\eta\beta} & 0 & 0 & \ldots & 0 & 0\\
	\sqrt{\eta\beta} & \alpha &  \sqrt{\eta\beta} & 0 & \ldots & 0 & 0 \\
	0 &  \sqrt{\eta\beta}  & \alpha &  \sqrt{\eta\beta} & \ldots & 0 & 0 \\
	\ldots & \ldots & \ldots & \ldots & \ldots & \ldots & \ldots \\
	\ldots & \ldots & \ldots & \ldots & \ldots & \alpha &  \sqrt{\eta\beta}  \\
	0 & 0 & 0 & 0 & \ldots &  \sqrt{\eta\beta}  & \alpha+b
\end{array}\right)
\end{equation}
 which can by seen from 
 $\babs{xI-T_N^{(a,b)}} = \babs{xI-A}$ by expanding the determinant along the last row. Here, $\babs{\cdot}$ denotes the determinant. 
 In the same manner, all the $\widehat \lambda_k$'s are real, as $\widehat T_N^{(a, b)}$ has the same eigenvalues as the Hermitian matrix
\[
\tilde{T}_N^{(a,b)}=\left(\begin{array}{ccccccc}
	\alpha+a+\epsilon_{\alpha, 1} & x_{1, 2}& 0 & 0 & \ldots & 0 & 0\\
x_{2,1}  & \alpha+ \epsilon_{\alpha, 2} & x_{2,3}& 0 & \ldots & 0 & 0 \\
	0 & x_{3,2}& \alpha+\epsilon_{\alpha, 3} & x_{3,4} & \ldots & 0 & 0 \\
	\ldots & \ldots & \ldots & \ldots & \ldots & \ldots & \ldots \\
	\ldots & \ldots & \ldots & \ldots & \ldots & \alpha+\epsilon_{\alpha, N-1} & x_{N-1, N} \\
	0 & 0 & 0 & 0 & \ldots &x_{N, N-1} & \alpha+b+\epsilon_{\alpha,n}
\end{array}\right),
\] 	
where 
\[
x_{j, j+1} = x_{j+1, j} =\sqrt{(\beta+\epsilon_{\beta,j})(\eta+\epsilon_{\eta, j+1})}, \quad j=1, \cdots, N-1.
\]
Now, we can make use of Theorem \ref{thm:weyltheorem1} to analyse the stability of the eigenvalues of $A$. Let 
\[
\tilde{\epsilon}_{j, j+1} = \tilde{\epsilon}_{j+1, j} =  \sqrt{(\beta+\epsilon_{\beta,j})(\eta+\epsilon_{\eta, j+1})}-\sqrt{\beta\eta}, \quad j=1, \cdots, N-1,
\]
and 
\[
\tilde{\epsilon}_{j, j} = \epsilon_{\alpha, j}, \quad j=1,\cdots, N.
\]
It is not hard to see that 
\[
\babs{\tilde{\epsilon}_{j, j+1} }< \left(\frac{\babs{\beta}+\babs{\eta}+\epsilon}{2\sqrt{\beta \eta}}\right)\epsilon, \quad j=1,\cdots, N-1,
\]
as $\epsilon\leq 1$. We decompose $\tilde{T}_N^{(a,b)}$ as
\[
A+E,
\]
where $E_{i, j}= \tilde{\epsilon}_{i, j}$, for $i=j, i=j+1, i= j-1$, and $E_{i,j}=0$ for other $i, j$. In particular, the following estimate holds:
\begin{align*}
\bnorm{E}_2 =& \max_{\bnorm{v}_2=1}\bnorm{Ev}_2 < \left(\frac{\babs{\beta}+\babs{\eta}+\epsilon}{\sqrt{\beta \eta}}+1\right)\epsilon.
\end{align*}
The above estimate can be derived from the fact that
\begin{align*}
Ev=&\begin{pmatrix}
	\tilde \epsilon_{1,1}v_1 + \tilde \epsilon_{1,2}v_2\\
	\tilde \epsilon_{2,1}v_1 + \tilde \epsilon_{2,2}v_2+ \tilde\epsilon_{2, 3}v_3\\
	\vdots \\
	\tilde \epsilon_{N-1,N-2}v_{N-2} + \tilde \epsilon_{N-1,N-1}v_{N-1}+ \tilde\epsilon_{N-1, N}v_N\\
	\tilde \epsilon_{N,N-1}v_{N-1} + \tilde \epsilon_{N,N}v_N
\end{pmatrix} \\
=&\begin{pmatrix}
	0\\
\tilde \epsilon_{2,1}v_1 \\
\vdots \\
\tilde \epsilon_{N-1,N-2}v_{N-2} \\
\tilde \epsilon_{N,N-1}v_{N-1} 
\end{pmatrix} +\begin{pmatrix}
\tilde \epsilon_{1,1}v_1 \\
\tilde \epsilon_{2,2}v_2\\
\vdots \\
\tilde \epsilon_{N-1,N-1}v_{N-1}\\
\tilde \epsilon_{N,N}v_N
\end{pmatrix} +\begin{pmatrix}
\tilde \epsilon_{1,2}v_2\\
 \tilde\epsilon_{2, 3}v_3\\
\vdots \\
\tilde\epsilon_{N-1, N}v_N\\
0
\end{pmatrix},
\end{align*}
where $v=(v_1, \ldots, v_N)^\top$ with the superscript $\top$ denoting the transpose. 

Leveraging Theorem \ref{thm:weyltheorem1} then proves the statement.
\end{proof}

\begin{remark}
Due to the fact that the matrix $T_N^{(a,b)}$ is tridiagonal and the Hermitian matrix $A$ is stable, we do not require any constraint on the perturbation level in Theorem \ref{thm:eigenvaluestability0}. When it comes to banded Toeplitz matrices and complex perturbations, the findings and proofs are more intricate. We refer the readers to \cite{sjostrand2016large, sjostrand2021toeplitz} for further information on this subject.
\end{remark}


Applying Theorem \ref{thm:eigenvaluestability0} to the eigenvalues of $T_N^{(\eta, \beta)}$ in Lemma \ref{lem:eigenvalues0} yields the following stability estimate.
\begin{theorem}\label{thm:eigenvaluestability1}
For the eigenvalue $\widehat\lambda_{k}$ of the perturbed Toeplitz matrix $\widehat T_N^{(\eta, \beta)}$ with 
\begin{equation}\label{equ:eigenvaluestability1}
\max_{j=1,\dots,N}\left(\babs{\epsilon_{\eta, j}}, \babs{\epsilon_{\alpha, j}}, \babs{\epsilon_{\beta, j}}\right)\eqqcolon \epsilon, 
\end{equation}
 we have $\widehat \lambda_1 = \epsilon_1$ and
    \begin{equation}\label{equ:eigenvaluestability3}
    \widehat\lambda_{k} = \lambda_{k}+\epsilon_k = \alpha+2\sqrt{\eta\beta}\cos\left(\frac{(k-1)\pi}{N}\right)+ \epsilon_k, \quad k=2, \cdots, N,
    \end{equation}
    with $\babs{\epsilon_k}\leq C_1(\eta, \beta, \epsilon)\epsilon, 1\leq k\leq N$ and $C_1(\eta, \beta, \epsilon)$ being defined by (\ref{equ:eigenvaluestability2}). In particular, all the ${\widehat \lambda}_k$'s are real numbers. 
\end{theorem}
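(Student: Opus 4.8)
The plan is to obtain \Cref{thm:eigenvaluestability1} as a direct corollary of the two preceding results, specialised to the parameter choice $(a,b)=(\eta,\beta)$. Recall that the randomly perturbed gauge capacitance matrix of \eqref{equ:toeplitzallperturbed1} is exactly $\widehat T_N^{(a,b)}$ with $a=\eta$ and $b=\beta$, and that the structural identities $\eta+\alpha+\beta=0$ and $\eta\beta>0$ hold for it; hence all hypotheses of \Cref{lem:eigenvalues0} and \Cref{thm:eigenvaluestability0} are met with this choice of $(a,b)$, and, thanks to the final remark in \Cref{sect2}, no smallness assumption on $\epsilon$ is needed.

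First I would invoke \Cref{thm:eigenvaluestability0} with $(a,b)=(\eta,\beta)$. This simultaneously yields three things: the eigenvalues $\{\lambda_k\}$ of $T_N^{(\eta,\beta)}$ are real, the eigenvalues $\{\widehat\lambda_k\}$ of $\widehat T_N^{(\eta,\beta)}$ are real, and, with both families arranged in a common (decreasing) order, $\babs{\widehat\lambda_k-\lambda_k}<C_1(\eta,\beta,\epsilon)\epsilon$ for every $k$. Setting $\epsilon_k:=\widehat\lambda_k-\lambda_k$ then immediately gives $\babs{\epsilon_k}\le C_1(\eta,\beta,\epsilon)\epsilon$ and the reality assertion, with $C_1$ as in \eqref{equ:eigenvaluestability2}.

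Next I would substitute the explicit form of the unperturbed spectrum supplied by \Cref{lem:eigenvalues0}, namely $\lambda_1=0$ (with eigenvector $\mathbf 1$) and $\lambda_k=\alpha+2\sqrt{\eta\beta}\cos\!\big((k-1)\pi/N\big)$ for $2\le k\le N$. Plugging these into $\widehat\lambda_k=\lambda_k+\epsilon_k$ produces $\widehat\lambda_1=\epsilon_1$ and exactly \eqref{equ:eigenvaluestability3}. The only point that genuinely requires attention — the proof is otherwise immediate — is the bookkeeping of the eigenvalue labelling: one must check that the index $k$ used in \Cref{lem:eigenvalues0} is compatible with the decreasing-order index used in \Cref{thm:eigenvaluestability0}, so that each $\widehat\lambda_k$ is paired with the correct $\lambda_k$. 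Since $\cos\!\big((k-1)\pi/N\big)$ decreases strictly in $k$ on $\{2,\dots,N\}$, the family $\lambda_2>\lambda_3>\dots>\lambda_N$ is already monotone; the outlier $\lambda_1=0$ must simply be placed at the correct end of this ordered list, and which end it lands at is decided by comparing $0$ against the extreme values $-(\eta+\beta)\pm2\sqrt{\eta\beta}$ (equivalently, by the common sign of $\eta$ and $\beta$). Relabelling the matched perturbations $\epsilon_k$ accordingly then finishes the argument.
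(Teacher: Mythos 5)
Your proposal is correct and follows exactly the route the paper intends: the paper offers no separate proof of \Cref{thm:eigenvaluestability1} beyond the sentence ``Applying Theorem \ref{thm:eigenvaluestability0} to the eigenvalues of $T_N^{(\eta,\beta)}$ in Lemma \ref{lem:eigenvalues0} yields the following stability estimate,'' which is precisely your two-step argument. Your extra check that the labelling of \Cref{lem:eigenvalues0} is already decreasing (since $\lambda_k=\alpha+2\sqrt{\eta\beta}\cos((k-1)\pi/N)\le -(\sqrt{\eta}-\sqrt{\beta})^2\le 0=\lambda_1$ for $k\ge 2$ when $\eta,\beta>0$) is a detail the paper leaves implicit, and you handle it correctly.
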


\begin{remark}
The eigenvalues of the tridiagonal Toeplitz matrix $T_N^{(0,0)}$ have been proved \cite{gregory1969collection, elliott1953characteristic} to be 
\[
\lambda_k = \alpha+2\sqrt{\eta \beta} \cos \left(\frac{k\pi}{N+1}\right), k=1, \cdots, N.
\]
\end{remark}

\begin{remark}
One can apply Theorems \ref{thm:eigenvaluestability0} and \ref{thm:eigenvectorstability1} to derive similar stability results for $T_N^{(0,0)}$. This corresponds to many examples in the non-Hermitian skin effect in condensed matter theory and quantum mechanics and thus our stability results here can be immediately applied to those examples. For the eigenvalues of the tridiagonal Toeplitz matrix with various perturbations on the corners, we refer the reader to \cite{yueh2005eigenvalues, yueh2008explicit}. 
\end{remark}

\section{Stability of eigenvectors}\label{sect3}
This section is devoted to estimating the stability of the eigenvectors of $\capmatg = T_N^{(\eta, \beta)}$.\\
For $\lambda_k$ defined in (\ref{equ:eigenvalueequ1}), let 
\begin{equation}\label{equ:defiofpj1}
p_{j}(\lambda_{k}) =  \left(\eta \sin \left(\frac{(j+1)(k-1) \pi}{N}\right)-\eta \sqrt{\frac{\eta}{\beta}} \sin \left(\frac{j(k-1) \pi}{N}\right)\right), \quad j=0,\cdots, N-1.
\end{equation}
Note that 
\begin{equation}\label{equ:boundonp1}
\babs{p_j(\lambda_{k})}\leq \babs{\eta}\left(1+\sqrt{\frac{\eta}{\beta}}\right), \quad j=0, \cdots, N-1. 
\end{equation}
The following results hold. 
\begin{theorem}\label{thm:eigenvectorstability1}
For $\widehat T_{N}^{(\eta, \beta)}$ defined by (\ref{equ:toeplitzallperturbed1}) and satisfying (\ref{equ:eigenvaluestability1}) and its eigenvalues $\widehat\lambda_k=\lambda_{k}+\epsilon_k, k=2,\cdots, N$ defined by (\ref{equ:eigenvaluestability3}), the corresponding eigenvectors are given by
\begin{equation}\label{equ:eigenvectorperturb1}
    \begin{aligned}
        \mathbf{\widehat x}_k =& \left(p_{0}(\lambda_{k})+\delta_0(\lambda_k),s \left(p_{1}(\lambda_{k})+\delta_{1}(\lambda_k)\right), s^2\left(p_2(\lambda_k)+\delta_{2}(\lambda_k)\right),  \cdots,\right. \\
        &\qquad \left. s^{N-1} \left(p_{N-1}(\lambda_k)+\delta_{N-1}(\lambda_k)\right) \right)^{\top},\quad k=2,\cdots, N,
    \end{aligned}
\end{equation}
where $s=\sqrt{\frac{\eta}{\beta}}$ and $p_{j}(\lambda_k)$ is defined in (\ref{equ:defiofpj1}). Moreover, we have 
	\begin{equation}\label{equ:eigenvectorstability1}
	\babs{ (-s)^j p_{j}(\lambda_k)}\leq \left(\frac{\eta}{\beta}\right)^{\frac{j}{2}}\babs{\eta}\left(1+\sqrt{\frac{\eta}{\beta}}\right), \quad j=0,1,\cdots, N-1,
	\end{equation}
 and
	\begin{equation}\label{equ:eigenvectorstability2}
	\babs{s^{j}\delta_j(\lambda_k)}\leq \zeta_{k,j}  \epsilon,\quad j=0,1,\cdots, N-1,
	\end{equation}
where
	\[
\zeta_{k,j} = \left(\sqrt{\frac{\eta}{\beta}}\left(\frac{\babs{\beta}(\babs{\eta}+\epsilon)}{(\babs{\beta}-\epsilon)|\eta|}\right)\right)^j\left(a_+r_{k,+}^{j}+a_-r_{k,-}^j-\zeta\right) 
	\]
 with 
\begin{equation}\label{equ:rpmformula1}
r_{k,\pm} = \left(\babs{\cos\left(\frac{(k-1)\pi}{N}\right)}+\frac{C_2(\eta, \beta)\epsilon}{\sqrt{\eta \beta }}\right)\pm\sqrt{\left(\babs{\cos\left(\frac{(k-1)\pi}{N}\right)}+\frac{C_2(\eta, \beta)\epsilon}{\sqrt{\eta \beta }}\right)^2+1},
\end{equation}
$C_2(\eta, \beta, \epsilon) = \frac{\babs{\beta}+\babs{\eta}+\epsilon}{2\sqrt{\beta \eta}}+1$, and $a_+, a_-, \zeta$ being bounded constants. In particular, for those indices $k$ such that 
\begin{equation}\label{equ:eigenvectorstability3}
\babs{\sqrt{\frac{\eta}{\beta}}\left(\frac{\babs{\beta}(\babs{\eta}+\epsilon)}{(\babs{\beta}-\epsilon)|\eta|}\right)r_{k,+}}<1,
\end{equation}
the corresponding eigenvector still has an exponential decay. 


As a result, there exists a constant $c$ such that if $\sqrt{\eta/\beta}<\sqrt{2}-1$ and $\epsilon< \frac{c}{N^2}$, then we still have exponential decay for all the corresponding eigenvectors $\mathbf {\widehat x}_k, 2\leq k \leq N,$ of $\widehat T_{N}^{(a, b)}$. Further, if we require $\sqrt{\eta/\beta}$ to be even smaller, then this exponential decay will remain for even larger values of $\varepsilon$.
\end{theorem}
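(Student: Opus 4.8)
The plan is to derive an explicit recurrence for the entries of the perturbed eigenvector $\mathbf{\widehat x}_k$ directly from the eigenvalue equation $\widehat T_N^{(\eta,\beta)}\mathbf{\widehat x}_k = \widehat\lambda_k \mathbf{\widehat x}_k$, and then treat the perturbation terms $\epsilon_{\alpha,j},\epsilon_{\beta,j},\epsilon_{\eta,j}$ as a forcing that drives a second-order linear recursion. Writing the $j$-th entry as $s^{j-1}(p_{j-1}(\lambda_k)+\delta_{j-1}(\lambda_k))$ with $s=\sqrt{\eta/\beta}$, the unperturbed part $p_j(\lambda_k)$ satisfies (by Lemma~\ref{lem:eigenvalues0}) a homogeneous three-term recurrence whose characteristic roots have modulus $1$; the perturbation error $\delta_j$ then satisfies the \emph{same} recurrence but with an inhomogeneous right-hand side of size $\BO(\epsilon)$ coming both from $\epsilon_k = \widehat\lambda_k-\lambda_k$ (controlled by Theorem~\ref{thm:eigenvaluestability0}) and from the entrywise perturbations of the sub- and super-diagonals. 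First I would isolate this recurrence for $\delta_j$, read off that its associated characteristic equation has roots $r_{k,\pm}$ of the form \eqref{equ:rpmformula1}, and invoke the variation-of-constants formula for linear recurrences to write $\delta_j$ as a sum $a_+ r_{k,+}^j + a_- r_{k,-}^j$ plus a particular solution, with the constants $a_\pm,\zeta$ bounded uniformly.

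\textbf{Key steps.} (1) Expand the eigen-equation row by row to obtain the scalar recurrence for $p_j$ and, separately, for the error $\delta_j$; verify the claimed form \eqref{equ:eigenvectorperturb1} of $\mathbf{\widehat x}_k$ and the elementary bound \eqref{equ:eigenvectorstability1}, which follows at once from \eqref{equ:boundonp1}. (2) Bound the coefficients of the $\delta_j$-recurrence: the diagonal gets $\babs{\alpha+\epsilon_{\alpha,j}-\widehat\lambda_k}$, which by Theorem~\ref{thm:eigenvaluestability1} differs from $\babs{\alpha-\lambda_k}=2\sqrt{\eta\beta}\,\babs{\cos((k-1)\pi/N)}$ by at most $C_2(\eta,\beta,\epsilon)\epsilon$ after absorbing the $\epsilon_k$ term; the off-diagonal ratios $(\beta+\epsilon_{\beta,j})/(\eta+\epsilon_{\eta,j+1})$ are pinned between $(\babs{\beta}-\epsilon)\babs{\eta}$-type bounds, giving the prefactor $\sqrt{\eta/\beta}\cdot\babs{\beta}(\babs{\eta}+\epsilon)/((\babs{\beta}-\epsilon)\babs{\eta})$ appearing in $\zeta_{k,j}$. (3) Solve the inhomogeneous recurrence: since the homogeneous roots are $r_{k,\pm}$, and one checks $r_{k,+}r_{k,-}=-1$ and $r_{k,+}+r_{k,-}=2(\babs{\cos}+C_2\epsilon/\sqrt{\eta\beta})$, the discrete Duhamel sum telescopes to the expression $a_+r_{k,+}^j+a_-r_{k,-}^j-\zeta$, and one estimates $\babs{s^j\delta_j}\le \zeta_{k,j}\epsilon$ by bounding the geometric sums, using that $\babs{r_{k,-}}<1<\babs{r_{k,+}}$ so the dominant growth is $r_{k,+}^j$. (4) For the decay conclusion, note the total growth rate of entry $j$ is governed by $\sqrt{\eta/\beta}\cdot(\text{off-diag factor})\cdot r_{k,+}$; when \eqref{equ:eigenvectorstability3} holds this is $<1$, so $\mathbf{\widehat x}_k$ still decays exponentially. (5) For the final uniform statement, estimate $r_{k,+}$ at the worst index ($k$ near $N$ where $\cos\approx 1$, but also $k$ small where $\babs{\cos}$ is largest): a Taylor expansion gives $r_{k,+}\le (\sqrt2+1) + \BO(\epsilon)$ uniformly; then $\sqrt{\eta/\beta}<\sqrt2-1$ makes the unperturbed product $<1$ with a fixed gap, and since the correction is $\BO(\epsilon N^2)$ — the $N^2$ coming from the worst-case amplification of $C_1\epsilon$ through $N$ recurrence steps — choosing $\epsilon<c/N^2$ preserves the strict inequality \eqref{equ:eigenvectorstability3} for all $2\le k\le N$.

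\textbf{Main obstacle.} The delicate point is step (3)–(5): controlling the accumulated error uniformly in both the site index $j$ and the mode index $k$. The homogeneous solutions $r_{k,\pm}^j$ can grow, so a naive $\ell^\infty$ bound on $\delta_j$ loses a factor that is exponential in $N$; the trick is that this growth is exactly cancelled by the explicit prefactor $(\sqrt{\eta/\beta}\cdot\text{off-diag factor})^j$ built into $\zeta_{k,j}$, so that the \emph{normalised} error $s^j\delta_j$ only inherits polynomial-in-$N$ amplification — and this is precisely why the threshold $\epsilon<c/N^2$ (rather than a dimension-independent bound, unlike Theorem~\ref{thm:eigenvaluestability0}) is needed. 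Getting the bookkeeping of these competing exponential factors right, and checking that $a_\pm,\zeta$ genuinely stay bounded as $k$ ranges (in particular near $k=N$ where $r_{k,+}\to\sqrt2+1$ and the two roots stay well-separated), is where the real work lies; the rest is the routine recurrence manipulation sketched above.
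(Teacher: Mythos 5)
Your proposal follows essentially the same route as the paper's proof: expand the eigenvalue equation row by row with the ansatz $s^j(p_j+\delta_j)$, use the unperturbed identity $\eta p_{j-1}+(\alpha-\lambda_k)sp_j+\beta s^2 p_{j+1}=0$ to isolate a forced three-term recurrence for the errors, bound the coefficients via the eigenvalue stability result, and solve the resulting linear recurrence for the error bounds through its characteristic roots $r_{k,\pm}$ after shifting away the inhomogeneous constant $\zeta$. The only cosmetic difference is that the paper propagates upper bounds $M_j$ on $|\delta_j|$ (with the off-diagonal prefactor factored out at each step) rather than writing an exact discrete Duhamel formula for $\delta_j$ itself, but the structure $a_+r_{k,+}^j+a_-r_{k,-}^j-\zeta$ and all subsequent conclusions are identical.
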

\begin{proof}
The inequality (\ref{equ:eigenvectorstability1}) is a direct consequence of (\ref{equ:eigenvectorequ1}) and (\ref{equ:boundonp1}). The rest of the argument consists in proving (\ref{equ:eigenvectorstability2}). For fixed $\lambda_k$, we see that
\begin{equation}\label{equ:proofeigenvectorstability0}
	\babs{\delta_j(\lambda_k)}\leq \left(\frac{\babs{\beta}(\babs{\eta}+\epsilon)}{(\babs{\beta}-\epsilon)|\eta|}\right)^{j} M_j \epsilon
	\end{equation}
and will analyse the constants $M_j$. We consider the eigenvalue problem
	\begin{equation}\label{equ:proofeigenvectorstability1}
	\left(\widehat{T}_{N}^{(\eta, \beta)}-\widehat \lambda_{k} I\right) \mathbf {\widehat x}_k=0,
	\end{equation}
	where $\mathbf {\widehat x}_k$ is (\ref{equ:eigenvectorperturb1}). For simplicity, we abbreviate $p_j(\lambda_{k})$ as $p_j$ and $\delta_j(\lambda_k)$ as $\delta_j$ in the proof. Based on the first row in (\ref{equ:proofeigenvectorstability1}), we have 
	\[
	(\alpha +\eta+\epsilon_{\alpha,1}-\lambda_k -\epsilon_k)(p_0+\delta_0)+(\beta+\epsilon_{\beta, 1})s(p_1+\delta_{1}) = 0.
	\]
	This gives 
	\[
	(\epsilon_{\alpha,1}-\epsilon_k)p_0 +\epsilon_{\beta, 1}sp_1 + (\alpha +\eta+\epsilon_{\alpha,1}-\lambda_k - \epsilon_k)\delta_0+(\beta+\epsilon_{\beta, 1})s \delta_{1} =0,
	\]
	where we have used $(\alpha+\eta-\lambda_k)p_0+\beta sp_1=0$. Let $\delta_0=\epsilon$. Then 
	\[
	M_{0} =1
	\]
	and 
	\[
	\delta_1 = \frac{(\epsilon_{\alpha,1}-\epsilon_k)p_0 +\epsilon_{\beta, 1}sp_1+(\alpha +\eta+\epsilon_{\alpha,1}-\lambda_k-\epsilon_k)\epsilon }{-(\beta+\epsilon_{\beta, 1})s}.
	\]
 Denote $C_2(\eta, \beta, \epsilon)= \frac{\babs{\beta}+\babs{\eta}+\epsilon}{2\sqrt{\beta \eta}}+1$. Then we have
	\begin{align*}
	\babs{\delta_1} \leq & \babs{\frac{(\epsilon_{\alpha,1}-\epsilon_k)p_0 +\epsilon_{\beta, 1}sp_1}{-(\beta+\epsilon_{\beta, 1})s}}+ \babs{\frac{(\alpha +\eta+\epsilon_{\alpha,1}-\lambda_k-\epsilon_k)\epsilon }{-(\beta+\epsilon_{\beta, 1})s}}\\
	\leq & \frac{|\eta|(1+s)(2C_2(\eta, \beta)+s) \epsilon}{(|\beta|-\epsilon)s}+\frac{\babs{\eta}+2\sqrt{\eta \beta}\babs{\cos\left(\frac{(k-1)\pi}{N}\right)}+2C_2(\eta, \beta, \epsilon)\epsilon}{(\babs{\beta} -\epsilon)s}\epsilon\\
	\leq & \left(\frac{\babs{\beta}(\babs{\eta}+\epsilon)}{(\babs{\beta}-\epsilon)|\eta|}\right)\left(s(1+s)(2C_2(\eta, \beta, \epsilon)+s)+ s+2\babs{\cos\left(\frac{(k-1)\pi}{N}\right)}+\frac{2C_2(\eta, \beta, \epsilon)\epsilon}{\sqrt{\eta \beta }}\right) \epsilon,
	\end{align*}
where we have used (\ref{equ:boundonp1}) and Theorem \ref{thm:eigenvaluestability1} in the second inequality. Therefore,
\[
M_1 = s(1+s)(2C_2(\eta, \beta, \epsilon)+s)+ s+ 2\babs{\cos\left(\frac{(k-1)\pi}{N}\right)}+\frac{2C_2(\eta, \beta, \epsilon)\epsilon}{\sqrt{\eta \beta }}.
\]

We now analyse the relation between $M_{j-1}, M_{j}$ and $M_{j+1}$. Based on the $(j+1)$-th row of (\ref{equ:proofeigenvectorstability1}), we have 
	\[
(\eta+\epsilon_{\eta, j+1})s^{j-1}(p_{j-1}+\delta_{j-1}) + (\alpha+\epsilon_{\alpha, j+1}-\lambda_k-\epsilon_k)s^{j}(p_{j}+\delta_j)+(\beta+\epsilon_{\beta, j+1})s^{j+1}(p_{j+1}+\delta_{j+1}) =0,
\]
which yields
\[
(\eta+\epsilon_{\eta, j+1})(p_{j-1}+\delta_{j-1}) +(\alpha+\epsilon_{\alpha, j+1}-\lambda_k-\epsilon_k)s(p_{j}+\delta_j)+(\beta+\epsilon_{\beta, j+1})s^{2}(p_{j+1}+\delta_{j+1}) =0.
\]
Making use of the identity $$\eta p_{j-1}+(\alpha -\lambda_k)s p_j+\beta s^2p_{j+1}=0,$$ we can eliminate some items to arrive at 
\begin{align*}
	&\epsilon_{\eta, j+1}p_{j-1} + (\epsilon_{\alpha, j+1}-\epsilon_k)sp_{j}+ \epsilon_{\beta, j+1} s^2 p_{j+1} \\
	&\qquad   +(\eta+\epsilon_{\eta, j+1})\delta_{j-1} +(\alpha+\epsilon_{\alpha, j+1}-\lambda_k-\epsilon_k)s\delta_{j}+ (\beta+\epsilon_{\beta, j+1})s^2\delta_{j+1} =0.
\end{align*}
Therefore, 
\begin{align*}
&\delta_{j+1} \\
=& \frac{\epsilon_{\eta, j+1}p_{j-1} +(\epsilon_{\alpha, j+1}-\epsilon_k)sp_{j}+ \epsilon_{\beta, j+1} s^2 p_{j+1}+(\eta+\epsilon_{\eta, j+1})\delta_{j-1} +(\alpha+\epsilon_{\alpha, j+1}-\lambda_k-\epsilon_k)s\delta_{j}}{-(\beta+\epsilon_{\beta, j+1})s^2}\\
=& \frac{\epsilon_{\eta, j+1}p_{j-1} + (\epsilon_{\alpha, j+1}-\epsilon_k)sp_{j}+ \epsilon_{\beta, j+1} s^2 p_{j+1}}{-(\beta+\epsilon_{\beta, j+1})s^2}\\
&\quad + \frac{(\eta+\epsilon_{\eta, j+1})\delta_{j-1}}{-(\beta+\epsilon_{\beta, j+1})s^2} + \frac{(\alpha+\epsilon_{\alpha, j+1}-\lambda_k-\epsilon_k)s\delta_{j}}{-(\beta+\epsilon_{\beta, j+1})s^2} \\
:=& I_1 +I_2 +I_3.
\end{align*}
For the term $I_1$, by (\ref{equ:boundonp1}) we have 
\begin{align*}
&\babs{\frac{\epsilon_{\eta, j+1}p_{j-1} - (\epsilon_{\alpha, j+1}-\epsilon_k)sp_{j}+ \epsilon_{\beta, j+1} s^2 p_{j+1}}{-(\beta+\epsilon_{\beta, j+1})s^2}}\leq \frac{\epsilon(1+2C_2(\eta, \beta, \epsilon)s+s^2)\babs{\eta}\left(1+s\right)}{(\babs{\beta} -\epsilon)s^2}\\
\leq& \frac{\babs{\beta}}{\babs{\beta}-\epsilon}\left(1+2C_2(\eta, \beta, \epsilon)s+s^2\right)(1+s) \epsilon =: \frac{\babs{\beta}}{\babs{\beta}-\epsilon} C_3(\eta, \beta, \epsilon) \epsilon.
\end{align*}
For the term $I_2$, by (\ref{equ:proofeigenvectorstability0}) we obtain that
\[
\babs{\frac{(\eta+\epsilon_{\eta, j+1})\delta_{j-1}}{-(\beta+\epsilon_{\beta, j+1})s^2}}\leq \frac{\babs{\eta}+\epsilon}{(\babs{\beta} -\epsilon)s^2}\left(\frac{\babs{\beta}(\babs{\eta}+\epsilon)}{(\babs{\beta}-\epsilon) \babs{\eta}}\right)^{j-1}M_{j-1} \epsilon\leq \left( \frac{\babs{\beta}(\babs{\eta}+\epsilon)}{(\babs{\beta}-\epsilon) \babs{\eta}}\right)^{j+1}M_{j-1}\epsilon.
\]
For the term $I_3$, by Theorem \ref{thm:eigenvaluestability1} and estimate (\ref{equ:proofeigenvectorstability0}), it follows that
\begin{align*}
&\babs{\frac{(\alpha+\epsilon_{\alpha, j+1}-\lambda_k-\epsilon_k)s\delta_{j}}{-(\beta+\epsilon_{\beta, j+1})s^2} }\leq \frac{2\sqrt{\eta \beta}\babs{\cos\left(\frac{(k-1)\pi}{N}\right)}+2C_2(\eta, \beta, \epsilon)\epsilon}{(\babs{\beta} -\epsilon)s}\babs{\delta_{j}}\\
\leq& \frac{\babs{\beta}}{\babs{\beta}-\epsilon} \left(2\babs{\cos\left(\frac{(k-1)\pi}{N}\right)}+\frac{2C_2(\eta, \beta, \epsilon)\epsilon}{\sqrt{\eta \beta }}\right) \left(\frac{\babs{\beta}(\babs{\eta}+\epsilon)}{(\babs{\beta}-\epsilon)|\eta|}\right)^{j}M_j \epsilon.
\end{align*}
Therefore, we have 
\begin{align*}
\babs{\delta_{j+1}}\leq  \left(\frac{\babs{\beta}(\babs{\eta}+\epsilon)}{(\babs{\beta}-\epsilon)|\eta|}\right)^{j+1} \left(2\left(\babs{\cos\left(\frac{(k-1)\pi}{N}\right)}+\frac{C_2(\eta, \beta, \epsilon)\epsilon}{\sqrt{\eta \beta }}\right)M_j+ M_{j-1}+ C_3(\eta, \beta, \epsilon) \right)\epsilon.
\end{align*}
By (\ref{equ:proofeigenvectorstability0}), the recurrence relation is
\begin{equation}\label{equ:proofvectorstablity1}
M_{j+1} = 2\left(\babs{\cos\left(\frac{(k-1)\pi}{N}\right)}+\frac{C_2(\eta, \beta, \epsilon)\epsilon}{\sqrt{\eta \beta }}\right)M_j+ {M}_{j-1}+ C_3(\eta, \beta, \epsilon).
\end{equation}
Define 
\begin{equation}\label{equ:proofvectorstablity2}
\tilde{M_{j}} = M_j+\zeta
\end{equation}
with 
\[
\zeta = \frac{C_3(\eta, \beta, \epsilon)}{2\left(\babs{\cos\left(\frac{(k-1)\pi}{N}\right)}+\frac{C_2(\eta, \beta, \epsilon)\epsilon}{\sqrt{\eta \beta }}\right)}. 
\]
Then (\ref{equ:proofvectorstablity1}) yields
\[
\tilde{M}_{j+1}= 2\left(\babs{\cos\left(\frac{(k-1)\pi}{N}\right)}+\frac{C_2(\eta, \beta, \epsilon)\epsilon}{\sqrt{\eta \beta }}\right)\tilde{M}_{j}+\tilde{M}_{j-1}.
\]
This is a linear recurrence relation. To solve it, we consider the roots of its characteristic equation
\[
r_k^2 = 2\left(\babs{\cos\left(\frac{(k-1)\pi}{N}\right)}+\frac{C_2(\eta, \beta, \epsilon)\epsilon}{\sqrt{\eta \beta }}\right)r_k+1,
\]
which are given by (\ref{equ:rpmformula1}). Thus we have 
\[
\tilde{M}_j = a_+r_{k,+}^{j}+a_-r_{k,-}^j
\]
with $a_{+}, a_{-}$ being  chosen to satisfy the initial conditions $$\tilde{M}_0=M_0+\zeta, \tilde{M}_1=M_1+\zeta.$$ It is not hard to see that $a_{+}, a_{-}$ are bounded. Now, by (\ref{equ:proofvectorstablity2}), $M_j$ reads 
\[
M_j = a_+r_{k,+}^{j}+a_-r_{k,-}^j-\zeta, 
\]
and we thus have 
\[
\babs{\delta_j}\leq\left(\frac{\babs{\beta}(\babs{\eta}+\epsilon)}{(\babs{\beta}-\epsilon)|\eta|}\right)^{j} \left(a_+r_{k,+}^{j}+a_-r_{k,-}^j-\zeta\right)\epsilon.
\]
This proves (\ref{equ:eigenvectorstability2}).
\end{proof}



Some remarks are now in order. 
\begin{remark}
The stability results obtained in Sections \ref{sect2} and \ref{sect3} can be generalised to the dimer case where one can utilise the characterization of the eigenvalues and eigenvectors given in \cite{dimerSkin}. 
\end{remark}

\begin{remark}
Note that the perturbation to the spacings $\{s_i\}$ between subwavelength resonators or the coefficient $\gamma$ in (\ref{equ:nonhermitiancoeff1}) of order $\epsilon$ will result in an $O(\epsilon)$ perturbation in the nonzero entries of the gauge capacitance matrix $\capmatg$. Thus, Theorem \ref{thm:eigenvectorstability1} can be applied to $\hat \capmat^{\gamma}$ to obtain a stability estimate to the eigenvectors and the skin effect of $\capmatg$. 
\end{remark}

We can illustrate numerically the results stated in Theorem \ref{thm:eigenvectorstability1}. In particular, we consider typical values in physical applications.  We let $\eta=0.15$, $\beta=3.15$ (which correspond to $\ell=s=1$ and $\gamma=3$) and $\epsilon$ satisfying (\ref{equ:eigenvectorstability3}). The results are presented in \cref{fig: validation theoreical result}, where we show the eigenvectors of a system of 50 subwavelength resonators on a logarithmic axis. If the perturbations are sufficiently small that the condition \eqref{equ:eigenvectorstability3} is satisfied, then the eigenvectors still all have the $(\sqrt{\beta/\eta})$ decay rate. However, when the perturbations are large enough that condition \eqref{equ:eigenvectorstability3} does not hold for some indices, then the corresponding modes have a much lower decay rate.

\begin{figure}[h]
    \centering
    \begin{subfigure}[t]{0.48\textwidth}
    \centering
    \includegraphics[height=0.76\textwidth]{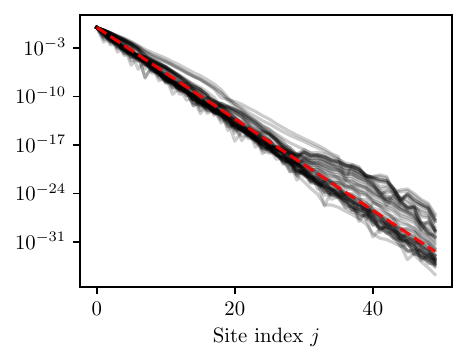}
    \caption{Exponential decay of the eigenvectors for $\epsilon$ satisfying \eqref{equ:eigenvectorstability3}. The eigenvectors superimposed on one another on a semi-log plot. The red dashed line represents $(\sqrt{\beta/\eta})^j$. We observe the same decay rate as the unperturbed case.}
    \label{fig: exponential decay small eps}
    \end{subfigure}\hfill
    \begin{subfigure}[t]{0.48\textwidth}
    \includegraphics[height=0.75\textwidth]{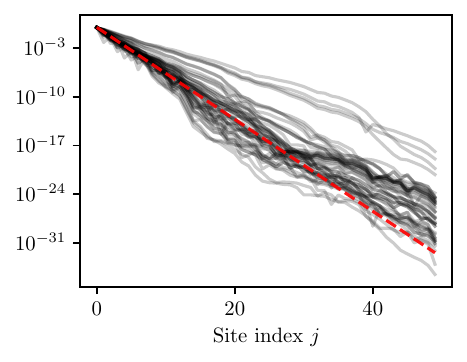}
        \caption{Decay of the eigenvectors for $\epsilon$ \emph{not} satisfying \eqref{equ:eigenvectorstability3}. The eigenvectors superimposed on one another on a semi-log plot. The red dashed line represents $(\sqrt{\beta/\eta})^j$. We observe several eigenvectors with lower decay rate than the one in Figure \ref{fig: exponential decay small eps}.}
    \label{fig: exponential decay big eps}
    \end{subfigure}
    \caption{Numerical illustration of the stability of the eigenvector decay rate predicted by \cref{thm:eigenvectorstability1}. The Toeplitz matrix has coefficients $\eta=0.15$, $\beta=3.15$ and is of size $50\times 50$.}
    \label{fig: validation theoreical result}
\end{figure}

\section{Numerical illustrations} \label{sect4}
In this section, we provide numerical evidence of the stability of the non-Hermitian skin effect and show how it competes with Anderson-type localisation of the eigenmodes in the bulk when the disorder is large. We will consider perturbations in both the geometry and the local values of the imaginary gauge potential. For the sake of brevity, we fix the size $\ell$ of the resonators  and perturb independently either $\gamma$ or the spacing $s$ between the resonators.

\subsection{Random perturbations of the geometry}
We first consider systems of subwavelength resonators where the relative spacings are perturbed as
\begin{align}
    {s_i} = 1 + \epsilon_i, \qquad \epsilon_i \sim \mathcal{U}_{[-\varepsilon,\varepsilon]}.
\end{align}
Here, $\mathcal{U}_{[-\varepsilon,\varepsilon]}$ is a uniform distribution with support in $[-\varepsilon,\varepsilon]$. In \cref{fig: condensation and winding}, we study how the eigenmodes of a system of 30 subwavelength resonators behave as the disorder increases. These results are averages based on 500 independent realisations. We show the relative proportion of eigenvalues that fall within the region of negative winding of the associated Toeplitz operator from \cref{fig: winding region}, as well as the proportion of eigenmodes accumulating at the left edge (which for this and the following figures has been defined as the number of eigenvectors that attain their maximal value, in absolute terms, in one of the first two dimers). We consider values of the disorder strength that are small enough that the resonators are guaranteed to not overlap. Both these quantities are constant for small disorder strengths then decrease once the disorder strength passes a certain threshold (as predicted by Theorem~\ref{thm:eigenvectorstability1}). The intersection of these two sets is also shown.

One notices very similar trends in the three lines in \cref{fig: condensation and winding}, with small differences due to the imperfect formulation of the accumulation measure and the perturbations. On the other hand, \cref{fig: localisation} shows the localisation of the eigenvectors for different disorder strengths. The localisation of the eigenvectors is measured using the quantity $\Vert v_i\Vert_\infty / \Vert v_i\Vert_2$ and the different lines correspond to different disorder strengths $\varepsilon$. We notice that the lines are indistinguishable, indicating that the localisation of the eigenvectors is independent of any random perturbation of the positions of the resonators.

\cref{fig: phase change and topological protection} shows similar stability properties  as those in \cref{fig: condensation and winding}, but here the relative number of eigenvalues falling within the region with negative winding is plotted for different values of $\varepsilon$ and $\gamma$. On the left side of the figure we see the topologically protected region: for these values of $\gamma$ any small perturbation size $\varepsilon$ will not cause any eigenvalue to exit the region and thus the corresponding eigenvector remains accumulated at the left edge of the structure.

The results in \cref{fig: geometry perturbation} show how the proportion of eigenvectors localised to the left edge of the system decreases as the disorder increases. Studying in the eigenvectors themselves, as shown in \cref{fig: condensed_single_realisations} for three different values of the disorder strength, we see that increasing disorder means an increasing number of eigenvectors are localised in the bulk rather than on the left edge. This behaviour is typical of Anderson-type localisation in disordered systems and demonstrates the internal competition between the skin effect and Anderson localisation.

\begin{figure}[h]
    \centering
    \begin{subfigure}[t]{0.45\textwidth}
    \centering
\includegraphics[height=0.75\textwidth]{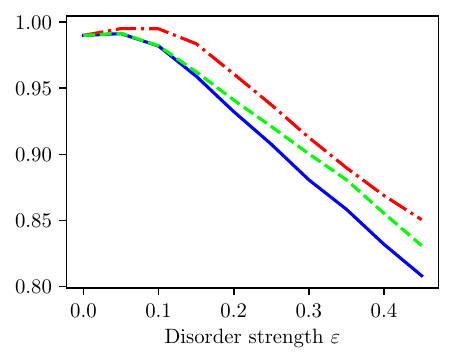}
    \caption{Eigenmode accumulation at one edge and topological winding. The green dashed line shows the average proportion of eigenvectors which are localised at the left edge. The red dash-dot line shows the the average proportion of eigenvalues that lay in the topologically protected region. The blue solid line shows the proportion of eigenpairs that have \emph{both} eigenvalues in the topologically protected region and eigenvectors accumulated on the left edge.}
    \label{fig: condensation and winding}
    \end{subfigure}\hfill
    \begin{subfigure}[t]{0.45\textwidth}
    \centering
    \includegraphics[height=0.76\textwidth]{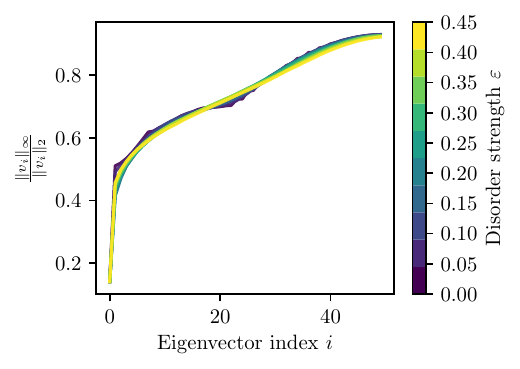}
    \caption{Eigenmode localisation. Each line shows the average eigenmode localisation for a different value of the disorder strength $\varepsilon$. For small $\epsilon$ the localisation is due to the skin effect, while for big $\epsilon$ it is consequence of the Anderderson localisation. As the lines are indistinguishable we conclude that the eigenmode localisation is independent of disorder strength; as $\varepsilon$ increases, modes might be localised in the bulk but will not become delocalised.}
    \label{fig: localisation}
    \end{subfigure}
    \\[5mm]
    \begin{subfigure}[t]{0.45\textwidth}
    \includegraphics[height=0.75\textwidth]{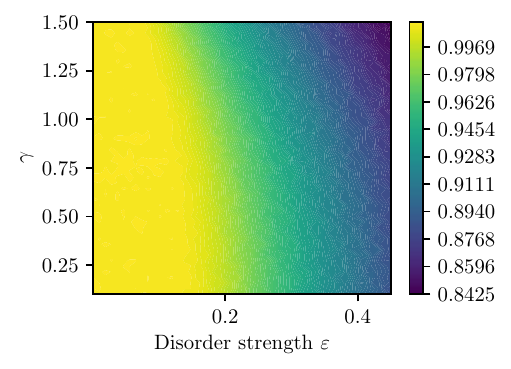}
        \caption{Phase change and topological protection. The color scale shows the average proportion of eigenvalues that lay in the topologically protected region for different values of $\gamma$. The left yellow zone is the stability region.}
    \label{fig: phase change and topological protection}
    \end{subfigure}
    \caption{Competition between the non-Hermitian skin effect and Anderson localisation when perturbing the geometry. The non-Hermitian skin effect shows stability with respect to random perturbations. Outside of the stability region, there is competition with Anderson localisation. Averages are computed over $500$ runs for a system of $50$ resonators with $\ell=s=1$.}
    \label{fig: geometry perturbation}
\end{figure}

\begin{figure}
    \centering
    \begin{subfigure}[t]{0.32\textwidth}
    \centering
    \includegraphics[height=0.76\textwidth]{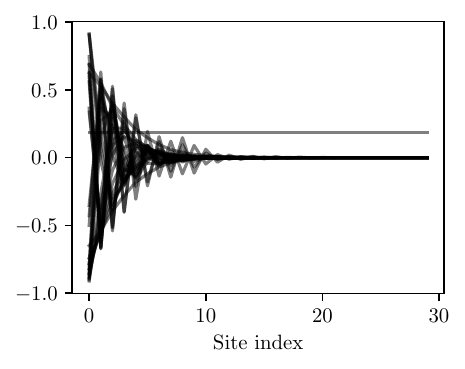}
    \caption{Single realisation with disorder strength $\varepsilon=0.1$. All eigenmodes are accumulated on the left edge.}
    \label{fig: condensed_single_realisation1}
    \end{subfigure}\hfill
    \begin{subfigure}[t]{0.32\textwidth}
    \centering
    \includegraphics[height=0.76\textwidth]{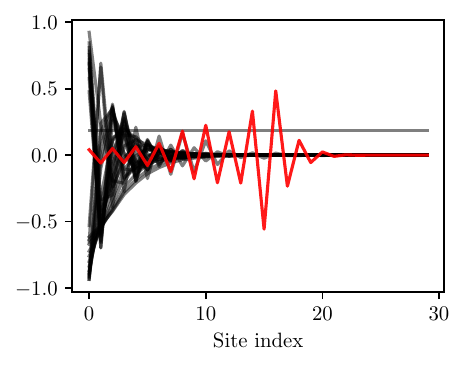}
    \caption{Single realisation with disorder strength $\varepsilon=0.2$. One eigenmode localised in the bulk is highlighted in red.}
    \label{fig: condensed_single_realisation2}
    \end{subfigure}\hfill
    \begin{subfigure}[t]{0.32\textwidth}
    \centering
    \includegraphics[height=0.76\textwidth]{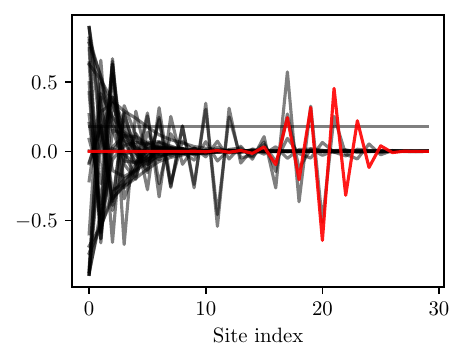}
    \caption{Single realisation with disorder strength $\varepsilon=0.4$. One eigenmode localised in the bulk is highlighted in red.}
    \label{fig: condensed_single_realisation4}
    \end{subfigure}
    \caption{Stability of the non-Hermitian skin effect under perturbations of the geometry. Eigenmode condensation on the left edge of the structure with some eigenmodes localised in the bulk. Single realisations with  $N=30$, $s=\ell=1$, and $\varepsilon=0.1,0.2,0.4$ for \cref{fig: condensed_single_realisation1}, \ref{fig: condensed_single_realisation2}, and \ref{fig: condensed_single_realisation4} respectively. This should be compared with \cref{fig: condendensed eigenvectors unperturbed}, where there is no disorder.}
    \label{fig: condensed_single_realisations}
\end{figure}

 \subsection{Random perturbations of the imaginary gauge potential}

In this subsection we consider systems of subwavelength resonators where now the spacing between the resonators is fixed to $s_i=1$, but the damping factor $\gamma$ is allowed to be different in each resonator. Specifically, we consider
\begin{align}
    {\gamma_i} = 1 + \epsilon_i, \qquad \epsilon_i \sim \mathcal{U}_{[-\varepsilon,\varepsilon]},
\end{align}
where $\gamma_i$ is the value taken by $\gamma$ in the $i$-th resonator. 

\cref{fig: gamma perturbation} is the analogue of \cref{fig: geometry perturbation} in this case. In this case, the disorder strength $\varepsilon$ is allowed to vary over a larger range as we do not have the issue of resonators overlapping. Note however that large disorder, such as $\vert \epsilon \vert > \vert \gamma \vert=1$, will possibly induce different signs in the $\gamma_i$ and thus striking changes in the coefficients of $\mathcal{C}^\gamma$. \cref{fig: condensation and winding pert_gamma} shows some similar behaviour to \cref{fig: condensation and winding}, in the sense that both quantities decrease as the disorder increases. However, for larger values of $\varepsilon$ there is an obvious decoupling of the quantities. This is due to the fact that, for very large random perturbations, $\widehat{\mathcal{C}}^\gamma$ is very far from being Toeplitz and thus the symbol of the associated Toepliz operator loses its meaning.

On the other hand, \cref{fig: localisation pert_gamma} shows the localisation measure of the eigenvectors for different disorder strengths. Eventhough the disorder is allowed to take much larger values here than in \cref{fig: localisation}, we once again observe that the localisation of a given eigenvector is almost constant as the disorder changes.

Finally, \cref{fig: phase change and topological protection pert_gamma} shows the analogous results to \cref{fig: phase change and topological protection}. Once again, we see that there is a region of topological protection. This time, it is in the top-left of the diagram (for large space $s$ and small disorder $\varepsilon$).

\Cref{fig: gamma perturbation,fig: condensed_single_realisations,fig: geometry perturbation} as a whole show an internal competition between the skin effect and the Anderson localisation: as disorder is introduced, modes transition from being condensed on the edge to being localised within the bulk. 

\begin{figure}[h]
    \centering
    \begin{subfigure}[t]{0.45\textwidth}
    \centering
\includegraphics[height=0.75\textwidth]{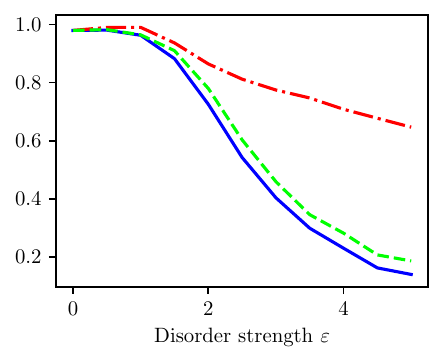}
    \caption{Eigenmode accumulation at one edge and topological winding. The green dashed line shows the average proportion of eigenvectors which are localised at the left edge. The red dash-dot line shows the the average proportion of eigenvalues that lay in the topologically protected region. The blue solid line shows the proportion of eigenpairs that have \emph{both} eigenvalues in the topologically protected region and eigenvectors accumulated on the left edge.}
    \label{fig: condensation and winding pert_gamma}
    \end{subfigure}\hfill
    \begin{subfigure}[t]{0.45\textwidth}
    \centering
    \includegraphics[height=0.76\textwidth]{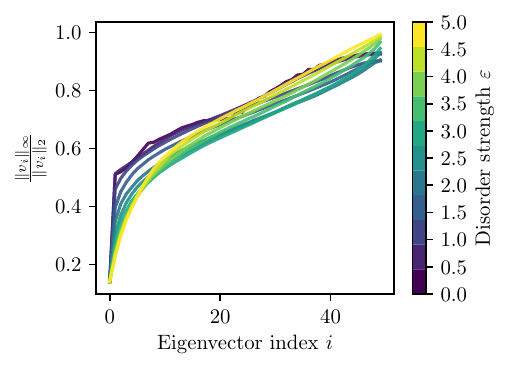}
    \caption{Eigenmode localisation. Each line shows the average eigenmode localisation for a different value of the disorder strength $\varepsilon$. For small $\epsilon$ the localisation is due to the skin effect, while for big $\epsilon$ it is consequence of the Anderderson localisation. Thus, the localisation is much less sensitive to the perturbations than the accumulation (as the position of localisation may be away from the edge for large disorder).}
    \label{fig: localisation pert_gamma}
    \end{subfigure}
    \\[5mm]
    \begin{subfigure}[t]{0.45\textwidth}
    \includegraphics[height=0.75\textwidth]{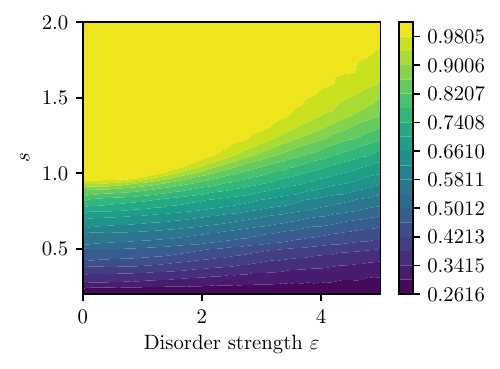}
    \caption{Phase change and topological protection. The color scale shows the average proportion of eigenvalues that lay in the topologically protected region for different values of $s$. The top-left yellow zone is the stability region.}
    \label{fig: phase change and topological protection pert_gamma}
    \end{subfigure}
    \caption{Competition between the non-Hermitian skin effect and Anderson localisation when perturbing the complex gauge potential. The non-Hermitian skin effect is stable with respect to random perturbations. Outside of the stability region,  there is a competition between the Non-Hermitian skin effect and the disorder-induced  Anderson localisation. Averages $500$ runs for a system of 50 resonators with $\ell=s=1$.}
    \label{fig: gamma perturbation}
\end{figure}

\subsection{Simultaneous perturbations of the geometry and the imaginary gauge potential}
For the sake of completeness, in \cref{fig: pert of gamma and s} we present the result of perturbing $\gamma$ and $s$ simultaneously by
\begin{align*}
    {s_i} &= 1 + \epsilon_i, \qquad \epsilon_i \sim \mathcal{U}_{[-\varepsilon_s,\varepsilon_s]}\\
    {\gamma_i} &= 1 + \epsilon_i,  \qquad\epsilon_i \sim \mathcal{U}_{[-\varepsilon_\gamma,\varepsilon_\gamma]}.
\end{align*}
The results show that the skin effect is very stable under any type of perturbations: the spacing between the resonators may be perturbed up to roughly $10\%$ and simultaneously the $\gamma$ factor up to $50\%$ independently in every resonator and the accumulation of eigenmodes on one edge of the structure remains unaltered.

\begin{figure}
    \centering
    \includegraphics[width=0.5\textwidth]{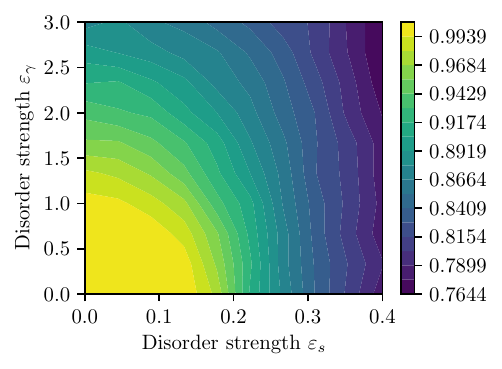}
    \caption{Phase transition for varying disorder strengths in $\gamma$ and $s$. The color scale shows the average proportion of eigenvalues that lay in the topologically protected region for different values of $\varepsilon_\gamma$ and $\varepsilon_s$. The bottom-left yellow zone is the stability region. The values are averages over 100 runs in a system of $50$ resonators with $\ell=1$.}
    \label{fig: pert of gamma and s}
\end{figure}

\section{Concluding remarks} \label{sect5}
Based on a stability analysis of the eigenvalues and eigenvectors of the gauge capacitance matrix, we have proved robustness of the non-Hermitian skin effect with respect to random changes of the strength $\gamma$ of the imaginary gauge potential and the spacing $s$ between the resonators. We have also elucidated the topological origins of such robustness in our setting. Under random perturbations, the eigenmodes which remain localised at the edge of the structure are precisely those whose associated eigenvalues (which remain real valued) remain within the region of the complex plane corresponding to negative winding of the symbol of the corresponding Toeplitz operator. As the strength of the disorder increases, an increasing number of eigenmodes become localised in the bulk as their corresponding eigenfrequencies leave the region of negative winding. This leads to a competition between the non-Hermitian skin effect and Anderson localisation in the bulk.  

The results in this paper could be generalised to systems with periodically repeated cells of $K \geq 2$ resonators \cite{ammari.li.ea2023Mathematical} and to higher dimensional systems, in which it is well known that the skin effect can be realised \cite{kai,3DSkin}. Since our results are based on an asymptotic matrix model for subwavelength physics, they can also be generalised to analogous tight-binding models in condensed matter theory. 

\addtocontents{toc}{\protect\setcounter{tocdepth}{0}}
\section*{Acknowledgments}
The work of PL was supported by Swiss National Science Foundation grant number 200021–200307. The work of BD was supported by a fellowship funded by the Engineering and Physical Sciences
Research Council (EPSRC) under grant number EP/X027422/1. 

\bigskip

\section*{Code availability}
The software used to produce the numerical results in this work is openly available at \\ \href{https://doi.org/10.5281/zenodo.8210678}{https://doi.org/10.5281/zenodo.8210678}.





\printbibliography

@article{ouranderson2022,
      title={Anderson localization in the subwavelength regime}, 
      author={Habib Ammari and Bryn Davies and Erik Orvehed Hiltunen},
 journal={arXiv preprint arXiv:2205.13337},
  year={2022}
}

@article{andersoninterplay1,
      title={Interplay of non-Hermitian skin effects and Anderson localization in nonreciprocal quasiperiodic lattices}, 
      author={Jiang, Hui and   Lang,  Li-Jun and Yang, Chao  and  Zhu, Shi-Liang and  Chen, Shu
},
 journal={Phys. Rev. B},
volume={100},
 pages = {054301},
  year={2019}
}

@article{andersoninterplay2,
      title={Interplay of disorder and point-gap topology: Chiral modes, localization,
and non-Hermitian Anderson skin effect in one dimension}, 
      author={Sarkar, Ronika and Hegde,  Suraj S. and 
 Narayan, Awadhesh},
 journal={Phys. Rev. B},
volume={106},
 pages = {014207},
  year={2022}
}

@article{andersoninterplay3,
      title={Observation of non-Hermitian topological
Anderson insulator in quantum dynamics}, 
      author={Lin, Quan and 
              Li, Tianyu and  Xiao1,  Lei and 
               Wang, Kunkun and  Yi, Wei and Xue, Peng},
 journal={Nature Comm.},
volume={13},
 pages = {3229},
  year={2022}
}

@article{applications,
      title={Non-reciprocal robotic metamaterials}, 
      author={Martin Brandenbourger and Xander Locsin and Edan Lerner and Corentin Coulais },
 journal={Nature Comm.},
volume={10},
 pages = {4608},
  year={2019}
}

@article{ammari2023mathematical,
      title={Mathematical foundations of the non-Hermitian skin effect}, 
      author={Habib Ammari and Silvio Barandun and Jinghao Cao and Bryn Davies and Erik Orvehed Hiltunen},
 journal={arXiv preprint arXiv:2306.15587},
  year={2023}
}

@article{dimerSkin,
      title={Perturbed Block Toeplitz matrices and the non-Hermitian skin effect in dimer systems of subwavelength resonators}, 
      author={Habib Ammari and Silvio Barandun and Ping Liu},
 journal={arXiv preprint arXiv:2307.13551},
  year={2023}
}

@article{yuto,
	title={Non-Hermitian physics},
	author={Ashidaa, Yuto and  Gonga, Zongping and Uedaa,Masahito },
	journal={Adv. Phys.},
	volume={69},
	number={3},
	pages={249--435},
	year={2020}
}

@article{kawabata,
	title={Real spectra in non-Hermitian topological insulators},
	author={Kawabata, Kohei and  Sato, Masatoshi},
	journal={Phys. Rev. Res.},
	volume={2},
	number={},
	pages={033391},
	year={2020}
}

@article{longhi,
	title={Non-Hermitian skin effect beyond the tight-binding models},
	author={Longhi, Stefano},
	journal={Phys. Rev. B},
	volume={104},
	number={},
	pages={125109},
	year={2021}
}

@article{kai,
	title={Universal non-Hermitian skin effect in two and
higher dimensions},
	author={Zhang, Kai and  Yang, Zhesen and  Fang, Chen},
	journal={Nature Commun.},
	volume={13},
	number={},
	pages={2496},
	year={2022}
}

@article{okuma,
	title={Non-Hermitian Topological
Phenomena: A Review},
	author={Okuma, Nobuyuki and  Sato, Masatoshi},
	journal={Annu. Rev. Condens. Matter Phys.},
	volume={14},
	number={},
	pages={83--107},
	year={2023}
}

@book{golub2013matrix,
    AUTHOR = {Golub, Gene H. and Van Loan, Charles F.},
     TITLE = {Matrix computations},
    SERIES = {Johns Hopkins Studies in the Mathematical Sciences},
   EDITION = {Fourth},
 PUBLISHER = {Johns Hopkins University Press, Baltimore, MD},
      YEAR = {2013},
     PAGES = {xiv+756},
   MRCLASS = {65-02 (65Fxx)},
  MRNUMBER = {3024913},
MRREVIEWER = {J\"{o}rg\ Liesen},
}

@article{elliott1953characteristic,
  title={The characteristic roots of certain real symmetric matrices},
  author={Elliott, Joseph Frederick},
  year={1953}
}

@article{gregory1969collection,
  AUTHOR = {Gregory, Robert T. and Karney, David L.},
     TITLE = {A collection of matrices for testing computational algorithms},
 PUBLISHER = {Wiley-Interscience [A division of John Wiley \& Sons, Inc.],
              New York-London-Sydney},
      YEAR = {1969},
     PAGES = {ix+154},
   MRCLASS = {65.35},
  MRNUMBER = {253538},
MRREVIEWER = {G.\ Fairweather},
}

@article{ipsen2009refined,
  title={Refined perturbation bounds for eigenvalues of Hermitian and non-Hermitian matrices},
  author={Ipsen, Ilse CF and Nadler, Boaz},
  journal={SIAM Journal on Matrix Analysis and Applications},
  volume={31},
  number={1},
  pages={40--53},
  year={2009},
  publisher={SIAM}
}

@book{parlett1998symmetric,
  title={The symmetric eigenvalue problem},
  author={Parlett, Beresford N},
  year={1998},
  publisher={SIAM}
}

@book{reis2011matrix,
  title={Matrix perturbations: bounding and computing eigenvalues},
  author={Reis da Silva, RJ and others},
  year={2011},
  publisher={OisterwijkUitgeverij BOXPress}
}

@article{sjostrand2016large,
  title={Large bidiagonal matrices and random perturbations},
  author={Sj{\"o}strand, Johannes and Vogel, Martin},
  journal={Journal of spectral theory},
  volume={6},
  number={4},
  pages={977--1020},
  year={2016}
}

@article{sjostrand2021toeplitz,
  title={Toeplitz band matrices with small random perturbations},
  author={Sj{\"o}strand, Johannes and Vogel, Martin},
  journal={Indagationes Mathematicae},
  volume={32},
  number={1},
  pages={275--322},
  year={2021},
  publisher={Elsevier}
}

@article{yueh2005eigenvalues,
  title={Eigenvalues of several tridiagonal matrices.},
  author={Yueh, Wen-Chyuan},
  journal={Applied Mathematics E-Notes [electronic only]},
  volume={5},
  pages={66--74},
  year={2005},
  publisher={Department of Mathematics, Tsing Hua University Hsinchu}
}

@article{yueh2008explicit,
  title={Explicit eigenvalues and inverses of tridiagonal Toeplitz matrices with four perturbed corners},
  author={Yueh, Wen-Chyuan and Cheng, Sui Sun},
  journal={the ANZIAM Journal},
  volume={49},
  number={3},
  pages={361--387},
  year={2008},
  publisher={Cambridge University Press}
}

@article{hatano,
  title = {Localization Transitions in Non-Hermitian Quantum Mechanics},
  author = {Hatano, Naomichi and Nelson, David R.},
  journal = {Phys. Rev. Lett.},
  volume = {77},
  issue = {3},
  pages = {570--573},
  year = {1996},
  publisher = {American Physical Society},
  doi = {10.1103/PhysRevLett.77.570},
  url = {https://link.aps.org/doi/10.1103/PhysRevLett.77.570}
}

@article{skinadd1,
  title = {Robust light transport in non-Hermitian photonic lattices},
  author = {Longhi, S. and Gatti, D. and Valle, G.D.},
  date = {2015},
  journaltitle = {Scientific reports},
  volume = {5},
  doi ={10.1038/srep13376},
  pages ={13376}
}

@article{skinadd2,
  title = {Non-Hermitian Physics without Gain or Loss: The Skin Effect of Reflected Waves},
  author = {Franca, S. and Könye, V. and  Hassler, F. and van den Brink, J. and Fulga, C.},
  date = {2022},
  journaltitle = {Phys. Rev. Lett. },
  volume = {129},
  pages = {086601}
}

@article{skinadd3,
  title = {Non-Hermitian morphing of topological modes},
  author = {Wang, W. and Wang, X. and Ma, G.},
  date = {2022},
  journaltitle = {Nature},
  volume = {608},
  pages = {50--55}
}

@misc{3DSkin,
  title = {The three-dimensional non-Hermitian Skin effect in systems of subwavelength resonators with imaginary gauge potential},
  author = {Ammari, Habib and Barandun, Silvio and  Cao, Jinghao and Davies, Bryn and Hiltunen, Erik Orvehed and Liu, Ping},
  journaltitle={},
  pubstate = {in preparation}
}

@misc{ammari.davies.ea2021Functional,
  title = {Functional Analytic Methods for Discrete Approximations of Subwavelength Resonator Systems},
  author = {Ammari, Habib and Davies, Bryn and Hiltunen, Erik Orvehed},
  date = {2021},
  publisher = {{arXiv}},
  doi = {10.48550/ARXIV.2106.12301},
  url = {https://arxiv.org/abs/2106.12301},
  copyright = {arXiv.org perpetual, non-exclusive license},
  keywords = {35C20,35J05,35P20,74J20,Analysis of PDEs (math.AP),FOS: Mathematics,FOS: Physical sciences,Mathematical Physics (math-ph)}
}

@article{ammari.li.ea2023Mathematical,
  title = {Mathematical Analysis of Electromagnetic Scattering by Dielectric Nanoparticles with High Refractive Indices},
  author = {Ammari, Habib and Li, Bowen and Zou, Jun},
  date = {2023},
  journaltitle = {Transactions of the American Mathematical Society},
  shortjournal = {Trans. Amer. Math. Soc.},
  volume = {376},
  number = {1},
  pages = {39--90}
}

@article{ghatak.brandenbourger.ea2020Observation,
  title = {Observation of Non-{{Hermitian}} Topology and Its Bulk\textendash Edge Correspondence in an Active Mechanical Metamaterial},
  author = {Ghatak, Ananya and Brandenbourger, Martin and Van Wezel, Jasper and Coulais, Corentin},
  date = {2020},
  journaltitle = {Proceedings of the National Academy of Sciences},
  volume = {117},
  number = {47},
  pages = {29561--29568},
  publisher = {{National Acad Sciences}}
}

@article{hatano.nelson1996Localization,
  title = {Localization Transitions in Non-Hermitian Quantum Mechanics},
  author = {Hatano, Naomichi and Nelson, David R.},
  date = {1996-07},
  journaltitle = {Physical Review Letters},
  shortjournal = {Phys. Rev. Lett.},
  volume = {77},
  number = {3},
  pages = {570--573},
  publisher = {{American Physical Society}},
  doi = {10.1103/PhysRevLett.77.570},
  url = {https://link.aps.org/doi/10.1103/PhysRevLett.77.570},
  pagetotal = {0}
}

@article{rivero.feng.ea2022Imaginary,
  title = {Imaginary Gauge Transformation in Momentum Space and Dirac Exceptional Point},
  author = {Rivero, Jose H. D. and Feng, Liang and Ge, Li},
  date = {2022-12},
  journaltitle = {Physical Review Letters},
  shortjournal = {Phys. Rev. Lett.},
  volume = {129},
  number = {24},
  pages = {243901},
  publisher = {{American Physical Society}},
  doi = {10.1103/PhysRevLett.129.243901},
  url = {https://link.aps.org/doi/10.1103/PhysRevLett.129.243901},
  pagetotal = {5}
}

@article{yokomizo.yoda.ea2022NonHermitian,
  title = {Non-{{Hermitian}} Waves in a Continuous Periodic Model and Application to Photonic Crystals},
  author = {Yokomizo, Kazuki and Yoda, Taiki and Murakami, Shuichi},
  date = {2022-05},
  journaltitle = {Phys. Rev. Res.},
  volume = {4},
  number = {2},
  pages = {023089},
  publisher = {{American Physical Society}},
  doi = {10.1103/PhysRevResearch.4.023089},
  url = {https://link.aps.org/doi/10.1103/PhysRevResearch.4.023089},
  pagetotal = {13}
}

@article{yokomizo.yoda.ea2022NonHermitiana,
  title = {Non-{{Hermitian}} Waves in a Continuous Periodic Model and Application to Photonic Crystals},
  author = {Yokomizo, Kazuki and Yoda, Taiki and Murakami, Shuichi},
  date = {2022-05},
  journaltitle = {Phys. Rev. Res.},
  volume = {4},
  number = {2},
  pages = {023089},
  publisher = {{American Physical Society}},
  doi = {10.1103/PhysRevResearch.4.023089},
  url = {https://link.aps.org/doi/10.1103/PhysRevResearch.4.023089},
  pagetotal = {13}
}

@article{anderson1958absence,
  title={Absence of diffusion in certain random lattices},
  author={Anderson, Philip W},
  journal={Phys. Rev.},
  volume={109},
  number={5},
  pages={1492},
  year={1958},
  publisher={APS}
}

\end{document}

\begin{align*}
s^{j+1}\delta_{j+1} =& \frac{\epsilon_{\eta, j+1}s^{j-1}p_{j-1} + (\epsilon_{\alpha, j+1}-\epsilon_k)s^jp_{j}+ \epsilon_{\beta, j+1} s^{j+1} p_{j+1}}{-(\beta+\epsilon_{\beta, j+1})}\\
&\quad + \frac{(\eta+\epsilon_{\eta, j+1})s^{j-1}\delta_{j-1}}{-(\beta+\epsilon_{\beta, j+1})} + \frac{(\alpha+\epsilon_{\alpha, j+1}-\lambda_k-\epsilon_k)s^j\delta_{j}}{-(\beta+\epsilon_{\beta, j+1})} \\
&:= I_1 +I_2 +I_3.
\end{align*}
\begin{align*}
\babs{I_1}\leq s^{j-1}\frac{C_2(\gamma, \beta)\epsilon}{\babs{\beta}-\epsilon}
\end{align*}